\newcommand{\FFF}{\vspace*{\bigskipamount}}
\newcommand{\cO}{\mathcal{O}}
\newcommand{\mF}{\mathbb{F}}
\newcommand{\remove}[1]{}
\newlength{\pagewidth}
\newlength{\captionwidth}
\newcommand{\qed}{\hfill $\square$ \smallbreak}
\newenvironment{proof}{\noindent{\bf Proof:}}{\qed}
\newtheorem{theorem}{Theorem}
\newtheorem{lemma}{Lemma}
\newlength{\boxwidth}
\begin{document}   

\baselineskip    3ex
\parskip         1ex

\title{Deterministic Computations on a PRAM\\ 
with Static Processor and Memory Faults~\footnotemark[1]\FFF\FFF}

\author{Bogdan S. Chlebus~\footnotemark[2] \and
		Leszek G{\k a}sieniec~\footnotemark[3] \and 
		Andrzej Pelc~\footnotemark[4]}

\date{}

\maketitle

\footnotetext[1]{
This work was published as~\cite{ChlebusGP-FI03}.
The results of this paper appeared in a preliminary form in~\cite{ChlebusGP95}.}

\footnotetext[2]{Department of Computer Science and Engineering,
University of Colorado Denver,
Denver, CO 80217, USA.
Work supported by the National Science Foundation under Grant No. 0310503.}

\footnotetext[3]{Department of Computer Science,
University of Liverpool, 
Liverpool L69 3BX, United Kingdom.
Work supported by the EPSRC grant GR/N09855/01.}

\footnotetext[4]{D\'{e}partement d'informatique et d'ing\'enierie, 
Universit\'{e} du Qu\'{e}bec en Outaouais,
Gatineau, Qu\'{e}bec J8X 3X7, Canada.
Work supported by the NSERC grant 0008136.}

\thispagestyle{empty}

\vfill

\begin{abstract}
We consider Parallel Random Access Machine (PRAM) which has some 
processors and memory cells faulty.
The faults considered are static, i.e., once the machine starts to operate, 
the operational/faulty status of PRAM components does not change.
We develop a deterministic simulation of a fully operational PRAM
on a similar faulty machine which has constant fractions of
faults among processors and memory cells.
The simulating PRAM has $n$ processors and $m$ memory cells,
and simulates a PRAM with $n$ processors and a constant
fraction of $m$ memory cells.
The simulation is in two phases: it starts with  preprocessing, which is 
followed by the simulation proper performed in a step-by-step fashion.
Preprocessing is performed in time $\cO((\frac{m}{n}+ \log n)\log n)$.
The slowdown of a step-by-step part of the simulation is $\cO(\log m)$.
\end{abstract}

\vfill

\setcounter{page}{1}

\newpage

\section{Introduction}

Computers are prone to hardware failures, and this vulnerability 
increases with the growth of the number of processing elements 
in multiprocessor machines.
It is a natural problem to investigate how to execute algorithms on 
machines with some faulty components.
The most satisfactory solution is to design a simulation mechanism
which is able to adapt automatically any executed program to the 
encountered pattern of faults.

We present a paradigm of deterministic simulation of
parallel algorithms in a synchronous shared-memory environment with faults.
The developed simulation is performed on a PRAM with $n$ 
processors, some possibly faulty, and $m$~memory cells, some possibly 
faulty too.
This simulating machine is called a {\em simulator}.
The simulated machine is a fully operational PRAM with $n$~processors
and a constant fraction of $m$~memory cells.
We assume that there are constant fractions of the number $n$ of processors 
and the number $m$ of memory cells of the simulator which are upper
bounds on the number of faulty processors and memory cells, respectively.
The distribution of faulty components may be arbitrary, subject only to
these bounds on the number of faults. 

It is for the first time that a deterministic simulation technique 
is designed for a PRAM under this model of fault-prone 
processors and fault-prone shared-memory cells. 
Note that the ability of a PRAM with some faulty memory cells to 
perform deterministic and fast computations under the worst-case 
fault distribution is not obvious.
Namely, if the simulation is deterministic then the sequences of memory 
cells accessed by processors are predetermined until some
operational cells are encountered.
This implies that some operational processors may {\em never\/} (within the
time bounds of the supposed fast simulation) find fault-free memory cells, 
and hence may never be able to communicate with other processors.
To circumvent this obstacle, the developed simulation is designed 
to rely on a constant fraction of operational processors.
In particular, we face the problem how to retrieve the input 
from the information available to the processors that are active 
in the simulation.
To this end we  use the method of information dispersal.

The term ``simulation'' may be used with various meanings in the realm of
distributed or parallel computing.
One of them concerns specifications of algorithms in layered systems,
in which complex solutions are constructed from simpler primitives; see
e.g.~\cite{Attiya-Welch-book2004}.
A simulation in this sense requires a precise development of the underlying
model of computation.
In the context of PRAM algorithms, the term simulation typically denotes 
a way of interpreting PRAM algorithms, which are written in a syntax 
allowing direct access to shared memory, in message-passing 
or distributed-memory environments, see e.g. \cite{Harris94}.
This requires developing an implementation of shared-memory access, 
and the emphasis is on an asymptotic performance of the underlying
communication mechanism.
The meaning that we are using is closer to the latter one.
We show that it is possible to run PRAM algorithms, which have been 
designed for an ideal fully reliable PRAM, on a PRAM with faults, 
achieving a moderate slowdown.

\noindent
{\bf Overview of the simulation.}
The simulation is in two phases: it starts with setting up the machine,
which is then followed by a phase in which an arbitrary program
is executed in a step-by-step fashion.
After preprocessing has been completed, some of the operational processors 
are identified as suitable to participate in the simulation. 
Then they are given new identification numbers to be used during simulation.
Similarly, some operational memory cells are given new virtual
addresses, and there is a mechanism created to access them by these new
coordinates.
More details are as follows.

\noindent
{\sc Preprocessing:}
\begin{quote}
A constant fraction of $n$ processors organize a constant fraction of 
$m$ memory cells of the shared memory into data structures to make possible 
accessing the contents of memory cells of the simulated machine 
by their addresses in the simulated memory in time $\cO(\log m)$ per access.
First the $m$ memory cells are conceptually partitioned into blocks of size
$m/n$ and assigned to all the processors, including the faulty ones.
Then each  operational processor verifies if it has available at least 
a certain constant number of operational cells within a certain 
constant-size range, 
and if this is the case then it considers itself {\em active}.
The remaining {\em passive\/} operational processors remain idle since this 
moment.
The active processors organize the memory available to them into a balanced 
search tree called a memory tree.
This tree is sufficient to provide a fast access to virtual memory 
if concurrent-read primitive is available, like in CREW or CRCW PRAMs.
A solution in the case of the EREW variant is more complex: 
during preprocessing, the processors organize themselves into a balanced 
binary tree, called processor tree, which is then used to facilitate 
concurrent access to the virtual memory without concurrent reading.
This part of preprocessing is performed in time 
$\cO((\frac{m}{n}+ \log n)\log n)$.

An input is provided to all the processors in a way which does not
depend on the distribution of faults.
Each processor, whether operational or not, obtains its part of the input,
which is of size of a constant number of memory cells per processor.
We use redundancy as a safeguard against some part of the input being
lost due to processor or memory faults.
The input is provided encoded according to a variant of 
information dispersal.
Active processors communicate with one another and retrieve the whole
original input from the information available to them.
It takes time $\cO(\log^2 n)$ to obtain the whole input.
\end{quote}

\noindent
{\sc The simulation proper:}
\begin{quote}
A program to execute and data on which it is to be run are provided as input.
This program is simulated in a step-by-step fashion.
A single active processor simulates $\cO(1)$ processors of the 
simulated machine.
The crucial part is performing the operation of accessing a shared memory 
cell, for either reading or writing.
The goal is to locate the physical address of a cell in the memory
of the simulator, given a virtual address of memory of the 
simulated machine. 
In the case of the EREW PRAM, each operation of memory access 
requires an ad hoc tree of processors, called an organized tree, 
which is constructed on top of the structure of the processor tree,
and which is discarded after the memory-access operation has been completed.
The slowdown of the proper part of the simulation is asymptotically 
the same as the time to perform a single memory-cell access and is 
$\cO(\log m)$.
\end{quote}

\noindent
{\bf Related work.}
A lot of research has been done to develop simulation techniques
transforming PRAM algorithms, which have been designed to operate 
in an ideal fault-free synchronous environment, into algorithms that are 
reliable when run on fault-prone or asynchronous machines.
Various approaches are possible, depending on the nature of faults
(static versus dynamic, deterministic versus stochastic, fail-stop
versus restartable), the efficiency criteria (time versus work), the 
properties and capabilities of the underlying model (synchronous versus 
asynchronous, concurrent read and write versus exclusive read and write), 
or the types of simulations (randomized versus deterministic).
In the static model of failures, a computation starts with some
components out of order, and the status of all the components is
not changed in the course of an execution.
A relatively mild form of dynamic failures is that of processor
crashes.
Models of failures may also allow delays or restarts of failed processors, 
such behaviour can be interpreted as a form of asynchrony.
Asynchrony in PRAMs makes the model closer to the way actual 
parallel systems operate (\cite{Valiant90-CACM,Valiant90-handbook}).
For results about the PRAM model in its asynchronous version 
see~\cite{AspnesH90,ColeZ89,Gibbons89,Herlihy91}.
As far as the number of faults is concerned, one may require that
all but one processors are allowed to fail.
To have a meaningful cost measure for such a situation, it is natural to
consider {\em work,} which is sometimes called the 
{\em available processor steps} (see~\cite{KanellakisS97-book}).
Under this cost measure any operational processor involved in the computation
contributes a unit of work for any clock cycle during which it is operational, 
even if it is idle.

Kanellakis and Shvartsman~\cite{KanellakisS97-book,KanellakisS92} identified 
a generic algorithmic problem called \emph{Write-All}:
given an array initialized with zeroes, the task is to change 
the values stored at all the locations in the array.
A solution of \emph{Write-All} on a PRAM with faults can be used iteratively 
to convert any shared-memory computation into a computation resilient 
to processor failures (see \cite{KanellakisS97-book,KedemPS90,Shvartsman91}).
The \emph{Write-All} problem was studied in a series of papers.
For the purpose of giving examples of specific performance bounds 
in this overview, we consider a special case when both the number 
of processors and the size of the array are equal, let $n$ be this number.
For deterministic computations which are robust against fail-stop failures, 
the best known algorithm solving \emph{Write-All} performs work 
$\cO(n\log^2 n/\log\log n)$, it was given in~\cite{KanellakisS92}.
If restarts may happen than the most efficient known algorithm was proposed
by Anderson and Woll~\cite{AndersonW97}, it performs work $\cO(n^{1+\epsilon})$, 
for arbitrary positive~$\epsilon$. 
This algorithm is based on an extension of the idea of a (binary) 
{\em progress tree\/} proposed first by Buss, Kanellakis, 
Ragde and Shvartsman~\cite{BussKRS96}, 
it uses $q$-ary rather than binary tree,  for suitable $q$.
The algorithm is non-constructive in that it relies on families 
of permutations with certain good properties, which are only known to exist;
the best known constructive families with relevant properties
are given in~\cite{NaorR95}.
Motivated by the algorithm in~\cite{AndersonW97}, Kanellakis and 
Shvartsman~\cite{KanellakisS97-book} proposed a constructive algorithm. 
The performance of this algorithms has not been fully analyzed yet, 
see~\cite{ChlebusDKMSV-SPAA01} for a work in this direction.
The expected work $\cO(n\log n)$ of randomized computations can be 
achieved even with restarts, as was shown by Martel et al.~\cite{MartelPS92}.
A logarithmic-time lower bound for randomized solutions of \emph{Write-All} showed 
by Martel and Subramonian~\cite{MartelS94} is in contrast with 
the algorithm of constant amortized expected slowdown developed 
by Kedem et al. in~\cite{KedemPRS91}, 
for a stochastic model of faults.
Groote et al.~\cite{GrooteHMV01} gave an algorithm for \emph{Write-All} which is efficient for the practical applications when the size of the array is significantly larger than the number of processors.
A similar problem \emph{Do-All} for a faulty message-passing environment
has been proposed by Dwork et al.~\cite{DworkHW98}: 
there are a number of similar and independent tasks which need 
to be performed in any order.

Chlebus et al.~\cite{ChlebusGI94} initialized the study of PRAM 
with memory faults.
They considered CRCW PRAM with deterministic memory faults, and designed 
efficient randomized PRAM simulations for both static and dynamic errors.
Indyk~\cite{Indyk96} studied computations exploiting bit operations and
resilient to memory faults.
G{\k a}sieniec and Indyk~\cite{GasieniecI97} studied PRAM with static memory 
faults and developed both EREW and CRCW simulations.
Our simulation is related to the results obtained in \cite{ChlebusGI94, DiksP00, GasieniecI97, GasieniecP96,Indyk96}.
Two  simulations presented in \cite{ChlebusGI94} are for the static
memory errors: one on the optimal number 
of $n/\log n$ processors and with  slowdown $\cO(\log n)$,
the other operating in real time on $n \log n$ processors. 
The simulations developed in \cite{ChlebusGI94} are
randomized, the performance bounds are expected, and the model 
is a  CRCW PRAM, whereas the simulation presented in this paper is 
deterministic and applicable to the whole PRAM family,
in particular to the weakest  EREW PRAM.
The preprocessing of the presented deterministic simulation developed
in this paper includes building a list of active processors 
and then a tree-like data structure. 
If $m=\Theta(n)$ then no trees are needed, just a list of active processors
suffices, as was shown in the conference version of this paper.
It has been shown that the list-construction part can be designed
to run within time $\cO(\log n)$, see \cite{DiksP00,GasieniecI97}.
Diks and Pelc~\cite{DiksP96} developed algorithms for the EREW PRAM, 
which are reliable and efficient under a stochastic processor-fault model.

Fault-tolerance of other models of parallel computing has also been studied. 
We mention briefly some of the models most closely related to this research.
Berenbrink et al.~\cite{BerenbrinkHS96} studied a distributed 
memory machine with randomly failing memory modules, and presented fast 
PRAM simulations in this model.
Chlebus et al.~\cite{ChlebusGI96} presented randomized simulations
of a PRAM on a  distributed memory machine with faulty memory cells, under
arbitrary distribution of faults in the memory modules.
Kontogiannis et al.~\cite{KontogiannisPSY00} studied simulations 
of the operational BSP on a BSP with processor faults.

Failure models are often described in terms of adversaries, who are
responsible for incurring worst-case scenarios.
A situation when the faults are set at the beginning and then do not change
is called {\em static}, it is the one considered in this paper.
A more complex situation is modeled by the {\em oblivious\/} adversarial 
model, in which the timing of faults is determined before the algorithm starts.
The dynamic case, when faults may occur in the course of computations at times
decided online by an adversary, are captured by the {\em adaptive\/} 
adversarial model.
If the number of faults is required to be upper bounded by a constant 
fraction of the total number of given components, then the adversary is said 
to be {\em linearly bounded}, it is the one considered in this paper.
Adaptive adversaries, with an upper bound on the number of faults, 
are in two variants. 
{\em Weakly-adaptive\/} adversaries designate fault-prone components, of
a prescribed number, prior to the start of algorithm 
and may fail only them in the course of an algorithm, at arbitrary steps.
{\em Strongly-adaptive\/} adversaries may decide on faults fully online.
The distinction between strongly and weakly adaptive models is meaningful only
if algorithms are randomized, because otherwise the adversary can predict
the future behaviour of algorithms.
The following are examples of applications of these adversarial models found in
recent literature.
Linearly-bounded adversaries were considered in~\cite{DiksP00} 
in the context of deterministic broadcasting and fault diagnosis.
A stochastic variant of a linearly-bounded adversary controlling processor 
faults in the BSP model was studied in~\cite{KontogiannisPSY00}.
Weakly-adaptive linearly-bounded adversaries were considered 
in~\cite{ChlebusGI94} in the context of faulty CRCW PRAM
with faulty memory cells.

Information dispersal is a term used for any method to encode a string of 
bits into $x$ pieces in such a way that the original
string can be retrieved from any $y$ of the pieces.
If the original string consists of $\ell$ bits, then each of the pieces
is to be of size $\ell/y$.
The numbers $y\le x$ are parameters, they may be arbitrary positive integers.
Known efficient implementations of information dispersal use algebraic methods,
often related to the theory of erasure and error-correcting codes.
The Rabin's method~\cite{Rabin89} is based on the properties of the Cauchy matrix
$[\frac{1}{a_i+b_i}]_{i,j}$.
Probably a conceptually simplest solution is based on the fact that a
polynomial of degree $k$ is uniquely determined by its values at $k+1$ points,
a similar idea underlies the Reed-Solomon error-correcting codes~\cite{MacwilliamsS77-book}.
It has been proposed by Preparata~\cite{Preparata89} and Lyuu~\cite{Lyuu93-book},
but has been anticipated much earlier by McEliece and Sarwate~\cite{McElieceS81}.
An advantage of this approach is that encoding and decoding can be done
efficiently by methods based on the Fast Fourier Transform.
The paradigm of information dispersal has been used in 
routing on faulty networks in a pioneering work of Rabin~\cite{Rabin89}, 
in cryptography by Shamir~\cite{Shamir79} and Ben-Or et al.~\cite{Ben-OrGW88}, and in PRAM simulation on faulty networks by Aumann and Ben-Or~\cite{AumannB91}.
As far as  unreliable PRAMs are concerned, the method of information 
dispersal has been previously used by Aumann et al.~\cite{AumannKPR93} in a context when memory cell values could be unreachable.
One can consult a book of Lyuu~\cite{Lyuu93-book} for an exposition of 
known implementations of information dispersal, a comprehensive account 
of its applications in parallel and distributed computing, and an extensive
bibliography.
We apply the information dispersal method to retrieve the whole input from the 
parts of input data available to processors active in the simulation.
Our approach is that of Reed-Solomon codes, and the exposition 
is self-contained as far as understanding the basic machinery and 
complexity estimates is concerned.

The paper is organized as follows.
Section~\ref{model} deals with the model, and in particular defines 
precisely the type of faults. 
In Section~\ref{preprocessing} the stage of preprocessing is described.
Section~\ref{simulation} gives details of the proper part of the simulation.

\section{Model of Computation}
\label{model}

This section defines the model of computation and discusses 
the assumptions we make concerning the nature, distribution 
and number of faults.

The model of computation that we consider in this paper is
the {\em Parallel Random-Access Machine\/} (PRAM),  see e.g.~\cite{JaJa92-book}.
PRAM  is a formal model that facilitates design and exposition 
of parallel algorithms.
We consider a synchronous variant.
The simulator consists of a shared memory of 
$m$ memory cells and of $n$  
processors which operate in synchronized steps. 
Each processor has its own local control and its own local memory.
It knows its identification, which is a unique number assigned to it in the
range from $1$ through $n$, 
we refer to it as the {\it physical index} of the processor.
The shared memory is assumed to consist of individually addressable
memory cells.
Each processor can access arbitrary locations, either in the shared 
memory or in its  local memory, in a single step. 
PRAM can be efficiently simulated on distributed-memory machines,
see \cite{Harris94,Heide92,Valiant90-handbook} for overviews of PRAM-simulation techniques.

PRAM can be considered as a direct generalization of the sequential model
RAM, in the sense that if we select one processor and forget
about the remaining ones then the resulting machine is a RAM.
Usually RAM is defined as a machine with unlimited amount of memory.
However, in the PRAM case, the size of memory can be a factor that determines
the speed of computation, because processors may use the shared memory 
to communicate with one another, and the size of memory limits the bandwidth of
such a communication. 
This issue becomes even more important in the context of a PRAM simulation on 
a PRAM with possibly faulty memory, because some quantitative assumptions need 
to be made regarding the distribution of faults.
In this paper when we refer to a PRAM then it is always a model with a
specific number of processors and a specific  size of shared 
memory, and these parameters denote the total numbers of these components,
with some of them possibly faulty. 

The PRAM model has variants categorized according to 
the semantics of concurrent access of processors to the shared memory
(\cite{JaJa92-book}).
The weakest {\sl exclusive-read exclusive-write}\/ (EREW)
allows only at most one processor to read or write to any location
in the shared memory during each step. 
For the EREW PRAM, violating the exclusivity restrictions 
results in a runtime error.
The {\sl concurrent-read exclusive-write}\/ (CREW) allows many  
processors to read from the same location in the shared memory 
at the same time, but simultaneous writes are not allowed.
The strongest {\sl concurrent-read concurrent-write}\/ (CRCW)
allows for simultaneous reads and writes by many processors to 
the same locations in the shared memory. 
In this paper we develop a universal simulation method that can be applied to
any variant of a PRAM with faults.
The simulated machine is of the same kind as the simulator, that is,
EREW, CREW or CRCW, respectively.
Our exposition concentrates on the case of EREW PRAM, 
which is most restrictive and hence requires more effort than the other
variants.

We study such a PRAM in which some among processors and some among 
shared memory cells may be faulty at the moment when computation starts.
The kinds of faults we are interested in are usually called
deterministic and static.
Here {\em deterministic\/} means that there are no stochastic assumptions
concerning the distribution of faults, and that the performance of the
simulation is measured according to the worst case, that is, with respect 
to the worst possible distribution of faults. 
The term {\em static\/} means that the  faulty/operational status of 
error-prone components remains the same in the course of a computation 
as it was when the simulation started.
Processor faults are of the {\em fail-stop\/} kind, i.e., 
faulty processors do not perform any action during the computation.
Each processor has a local storage with a constant number of memory cells,
which we assume to be  sufficiently large for the
purpose of the given deterministic simulation.
Local memories are assumed not to be faulty.
An important property of the model is the capability of 
processors to recognize faulty cells in shared memory.
Namely, whenever a processor attempts to access a memory 
cell then it is immediately notified by hardware whether the cell 
is operational or not.
During the setup phase, each shared memory cell of the simulated machine 
is assigned a shared memory cell of the simulator, 
for the purpose of storing its contents.
Hence the memory cells of the simulated machine are exactly as those 
of the simulator, in terms of the number of bits they can store.

We assume that at least a certain constant fraction of $n$~processors and 
at least a certain constant fraction of $m$ memory cells are operational. 
If we are to be able to cope with an arbitrary location of faults
by a simulation which is both deterministic and 
based on a fixed initial assignment of blocks of memory to processors,
then additional assumptions about the number of faults are needed.
To see this, consider a PRAM with $n$ processors and $n$
memory cells, such that up to half of the processors and up to half of 
memory cells may fail; then, for any fixed assignment of cells to processors, 
it might happen that the operational processors are assigned to exactly 
the faulty cells.
We make suitable assumptions concerning the numbers of faults
to avoid such a situation.

We assume that there are two constants $0<f_s<1$ and $0<f_p<1$ such that the 
number of faulty processors is at most $f_p n$, and the number of 
faulty memory cells is at most $f_s m$.
An additional assumption about $f_p$ and $f_s$ is that $f_p+f_s<1$.
A contiguous segment of memory cells is an 
{\it $\alpha$-$\beta$-good segment}, 
for integers $\alpha\ge \beta>1$, if it is of size at most 
$\alpha$, it begins and ends at operational cells, and contains exactly 
$\beta$ operational memory cells.
An $\alpha$-$\beta$-good segment is uniquely specified by the address of 
its first operational cell, we call it this {\em segment's address}.
The shared memory is conceptually partitioned into $n$ contiguous segments
of memory cells of size $\lfloor \frac{m}{n}\rfloor$, called {\em blocks}.
The $i$th block is assigned to the $i$th processor.
A processor is {\em $\alpha$-$\beta$-active\/} if it is both operational 
and its assigned block contains at least one $\alpha$-$\beta$-good segment, 
otherwise it is said to be {\em $\alpha$-$\beta$-dormant}.
The block assigned to an active processor is also called active.

$\alpha$-$\beta$-good segments are used as records to store information.
The memory is organized into data structures during the setup phase of
simulation.
Number $\beta$ is required to be large enough so that an $\alpha$-$\beta$-good
segment has sufficient space to store all the pointers needed. 
The size of number $\beta$ is determined implicitly by the simulation
mechanism, as described in Sections~\ref{preprocessing} 
and~\ref{simulation}.
 From now on we assume $\beta$ to be a sufficiently large fixed constant. 
Parameter $\alpha$ depends on all the constants $f_p$, $f_s$ and~$\beta$.

\begin{lemma}
\label{l1}
For any constants $f_p>0$, $f_s>0$, such that $f_p+f_s<1$, 
and integer $\beta>1$, 
there is an integer $\alpha\ge \beta$ such that if a faulty PRAM 
with $n$ processors has at least $m\ge \alpha n$ shared memory cells 
then the number of $\alpha$-$\beta$-active processors is at least 
$n\frac{1-(f_p+f_s)}{2}$.
\end{lemma}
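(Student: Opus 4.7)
The plan is to bound the number of \emph{inactive} operational processors by bounding the number of blocks that fail to contain any $\alpha$-$\beta$-good segment, and then to choose $\alpha$ large enough that the slack forced by the condition $f_p+f_s<1$ absorbs both the count of faulty processors and the count of such ``bad'' blocks.

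First I would make a simple reformulation: a block contains an $\alpha$-$\beta$-good segment if and only if it has at least $\beta$ operational cells inside some window of $\alpha$ consecutive cells (take the first $\beta$ operational cells of such a window; they span at most $\alpha$ positions and the two extreme ones are operational by choice). Consequently, call a block \emph{bad} if every window of $\alpha$ consecutive cells inside it contains fewer than $\beta$ operational cells. In particular, a bad block itself, which has size $s=\lfloor m/n\rfloor\ge \alpha$ (using $m\ge\alpha n$), has operational cell density strictly less than $\beta/\alpha$, hence more than $s(1-\beta/\alpha)$ faulty cells.

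Next I would do the counting. Let $b$ denote the number of bad blocks. Since all faulty cells together total at most $f_s m\le f_s n(s+1)$, the bound on faulty cells per bad block gives
\[
b\cdot s\Bigl(1-\tfrac{\beta}{\alpha}\Bigr)\ \le\ f_s n(s+1),
\]
so $b\le \dfrac{f_s n(1+1/s)}{1-\beta/\alpha}\le \dfrac{f_s n(1+1/\alpha)}{1-\beta/\alpha}$. A processor is $\alpha$-$\beta$-inactive precisely when it is faulty or its assigned block is bad, so the number of inactive processors is at most $f_p n+b$, and the number of active ones is at least $(1-f_p)n-b$.

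Finally I would choose $\alpha$. We need $(1-f_p)n-b\ge n(1-(f_p+f_s))/2$, i.e.\ $b\le n(1-f_p+f_s)/2$. Because $1-f_p>f_s$ by hypothesis, we have $(1-f_p+f_s)/2>f_s$, so there is some $\varepsilon>0$ with $(1-f_p+f_s)/2=f_s+\varepsilon$. Since $\dfrac{f_s(1+1/\alpha)}{1-\beta/\alpha}\to f_s$ as $\alpha\to\infty$, picking $\alpha$ sufficiently large (and at least $\beta$) makes this ratio at most $f_s+\varepsilon$, which forces $b\le n(1-f_p+f_s)/2$ and yields the claimed lower bound on the number of active processors. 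The main thing to be careful about is the discreteness at the boundary (the floor in $s=\lfloor m/n\rfloor$ and the ``exactly $\beta$'' vs.\ ``at least $\beta$'' distinction in the definition of a good segment), which is exactly what the first-$\beta$-operational-cells argument and the $1+1/\alpha$ slack above handle; everything else is routine.
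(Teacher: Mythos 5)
Your proposal follows essentially the same route as the paper's proof: bound the number of operational processors whose blocks contain no good segment by counting the faulty cells such a block must contain, compare with the global bound $f_s m$, and let $\alpha\to\infty$ so that the resulting fraction tends to $f_s$, which together with $f_p+f_s<1$ yields the stated constant. Your reformulation of ``contains a good segment'' as ``some window of $\alpha$ consecutive cells has at least $\beta$ operational cells'' is correct, and the closing algebra (it suffices that $b\le n(1-f_p+f_s)/2$, and the ratio tends to $f_s<(1-f_p+f_s)/2$) is right.

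One intermediate claim is false as stated: a bad block of size $s\ge\alpha$ need not have operational density below $\beta/\alpha$. With $\alpha=100$, $\beta=3$, $s=102$, placing operational cells at positions $1,2,101,102$ makes every window of $100$ consecutive cells contain only $2$ operational cells, yet the density is $4/102>3/100$. The correct statement, obtained e.g.\ by noting that in a bad block any $\beta$ consecutive operational cells must span more than $\alpha$ positions (so grouping the operational cells into runs of $\beta-1$ shows there are at most $(\beta-1)\bigl(\frac{s}{\alpha}+1\bigr)$ of them), is that a bad block has at least $s\bigl(1-\frac{\beta-1}{\alpha}\bigr)-(\beta-1)$ faulty cells. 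Substituting this weaker per-block bound only changes the expression whose limit you take --- it still tends to $f_s$ as $\alpha\to\infty$ --- so your conclusion is unaffected; but the step needs this correction, since the disjoint-window covering you would presumably use to justify the density claim does not give $\beta/\alpha$. (For what it is worth, the paper's own proof asserts a per-block faulty fraction of $\frac{\alpha-\beta+1}{\alpha}$ and is subject to the same boundary correction.)
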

\begin{proof}
The assumption $m\ge\alpha n$ guarantees that each processor has assigned at
least $\alpha$ shared memory cells.
Let us consider an arbitrary pair $\alpha\ge\beta$, 
and let $d$ be the corresponding number of 
$\alpha$-$\beta$-dormant processors.
At most $f_p n$ processors are faulty.
If an operational processor is $\alpha$-$\beta$-dormant then each segment 
of size $\alpha$ in its memory contains at least $\alpha-\beta+1$ faulty cells.
Hence the total number $x$ of faulty memory cells satisfies the inequalities
\[
(d-f_p n)\left\lfloor\frac{m}{n}\right\rfloor 
\frac{\alpha-\beta+1}{\alpha} \le x \le f_s m\ .
\]
By removing the floor function we obtain a weaker inequality
\[
(d-f_p n)\left(\frac{m}{n}-1\right) \frac{\alpha-\beta+1}{\alpha}
\le f_s m\ .
\]
Multiplying both sides by $\frac{ \alpha}{\alpha-\beta+1}\frac{n}{m-n}$
and rearranging yields the inequality
\[
d
\le
n \Bigl(f_p+f_s \frac{\alpha}{\alpha-\beta+1}\frac{m}{m-n}\Bigr)
\ .
\]
Since both $\frac{\alpha}{\alpha-\beta+1}$ and $\frac{m}{m-n}$ 
approach~$1$ as $\alpha$ grows to infinity, and $f_p+f_s<1$, the following
inequality 
\[
f_p+f_s \frac{\alpha}{\alpha-\beta+1}\frac{m}{m-n}
<\frac{1}{2}(f_p+f_s+1)
\]
holds for sufficiently large $\alpha$.
Then the number of $\alpha$-$\beta$-active processors is at least 
\[
n\Bigl(1-\frac{f_p+f_s+1}{2}\Bigr)= n\frac{1-(f_p+f_s)}{2}\ ,
\]
which is the estimate we seek.
\end{proof}
$\alpha$-$\beta$-good segments in a block can be located by a greedy 
procedure which scans consecutive cells of the block, 
and, as soon as $\beta$ operational cells are found in a range of at 
most $\alpha$ cells, it designates them as a separate 
$\alpha$-$\beta$-good segment (see the description of Stage~1 of the
preprocessing in Section~\ref{simulation}).
This is done in each block separately.
 From now on, when referring to $\alpha$-$\beta$-good segments, we consider
them as produced in this way.

\begin{lemma}
\label{l2}
If the inequality $\alpha>\frac{\beta-1}{1-f_s}$ holds then there is a constant
$\delta>0$, which depends on $f_s$, $\alpha$ and $\beta$, such that 
the total number of $\alpha$-$\beta$-good segments in all the blocks 
is at least $\delta m$.
\end{lemma}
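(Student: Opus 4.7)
The approach is to analyse the greedy procedure directly, bounding how many operational cells it fails to include in any good segment and deriving a corresponding lower bound on the number of segments produced.

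Fix a block of size $b = \lfloor m/n \rfloor$ and let $p_1 < p_2 < \cdots < p_{o_i}$ be its operational cells. I phrase the greedy as maintaining an index $k$, initially $1$: if $p_{k+\beta-1} - p_k \le \alpha - 1$, it forms the segment $[p_k,\, p_{k+\beta-1}]$ and advances $k$ by $\beta$; otherwise it advances $k$ by $1$. Each operational cell of the block is then either (i) inside a formed segment, (ii) \emph{skipped} (considered as a segment start but rejected), or (iii) in the \emph{tail} of at most $\beta - 1$ cells never considered because fewer than $\beta$ cells remain. Writing $G_i, S_i, T_i$ for the three counts, one has $o_i = \beta G_i + S_i + T_i$ and $T_i \le \beta - 1$.

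Every skipped index $k$ satisfies $\sum_{j=k}^{k+\beta-2} \delta_j \ge \alpha$, where $\delta_j = p_{j+1} - p_j$. Summing this inequality over the skipped indices of the block, and noting that each $\delta_j$ appears in at most $\beta - 1$ of these overlapping $(\beta-1)$-term windows,
\[
S_i\,\alpha \;\le\; (\beta-1)\sum_{j}\delta_j \;=\; (\beta-1)(p_{o_i} - p_1) \;\le\; (\beta-1)\,b,
\]
which yields $S_i \le (\beta - 1) b / \alpha$ and therefore $G_i \ge (o_i - (\beta-1)b/\alpha - (\beta-1))/\beta$. Summing over the $n$ blocks and invoking $\sum_i o_i \ge (1-f_s)m - n$ (at most $n$ cells lie outside any block), $\sum_i b \le m$, and $n \le m/\alpha$ (from $m \ge \alpha n$), the total number of good segments is at least
\[
m\left[\frac{1-f_s}{\beta} - \frac{\beta - 1}{\alpha\beta}\right] \;-\; O\!\left(\frac{m}{\alpha}\right),
\]
where the correction absorbs the tail contributions $T_i$ and the cells outside blocks. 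The bracketed coefficient equals $[\alpha(1-f_s) - (\beta-1)]/(\alpha\beta)$, which is strictly positive exactly under the hypothesis $\alpha > (\beta-1)/(1-f_s)$.

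The final and most delicate step will be to confirm that this positive leading term dominates the $O(m/\alpha)$ correction, producing an explicit constant $\delta = \delta(f_s, \alpha, \beta) > 0$. For $\alpha$ taken a small amount above the bare threshold this is immediate; for $\alpha$ very close to the threshold a careful bookkeeping of the constants hidden in the $O(\cdot)$ term, or a sharper treatment of the tail $T_i$ and of the cells outside blocks, is needed to cover the full range of $\alpha$ allowed by the hypothesis.
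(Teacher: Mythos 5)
Your decomposition $o_i=\beta G_i+S_i+T_i$ and the gap-sum bound $S_i\le(\beta-1)b/\alpha$ are fine, and in fact your main term $\frac{m}{\alpha\beta}\bigl[\alpha(1-f_s)-(\beta-1)\bigr]$ becomes positive at exactly the claimed threshold, the same threshold the paper's argument produces. The genuine gap is the step you defer to the end, and it is not closable by ``careful bookkeeping of the constants hidden in the $O(\cdot)$ term.'' Your additive losses are genuinely $\Theta(n\beta)$: when $m=\alpha n$ every block has only $b=\alpha$ cells, and an adversary who gives most blocks exactly $\beta-1$ operational cells (legal whenever $(1-f_s)\alpha$ is close to $\beta-1$) really does force $T_i=\beta-1$ and $G_i=0$ in those blocks. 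Since $n$ can be as large as $m/\alpha$, the correction is $\Theta(\beta m/\alpha)$, while the main term $\frac{m}{\alpha\beta}\bigl[\alpha(1-f_s)-(\beta-1)\bigr]$ can be made an arbitrarily small multiple of $m/\alpha$ by taking $\alpha$ near the threshold. So your inequality, however carefully its constants are tracked, only proves the lemma for roughly $\alpha(1-f_s)>2\beta-1$, not for the full hypothesis. Closing the range requires a different count in the near-threshold regime: for instance, lower-bound the number of \emph{blocks} containing at least $\beta$ operational cells within some window of length $\alpha$ (each such block yields at least one good segment), using a Markov-type argument on the distribution of the $f_sm$ faulty cells among blocks; in the example above this gives $\Omega\bigl(\frac{m}{\alpha(\alpha-\beta+1)}[\alpha(1-f_s)-(\beta-1)]\bigr)$ segments and rescues the statement with a $\delta$ that degrades, as it must, as $\alpha$ approaches $\frac{\beta-1}{1-f_s}$.

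For comparison, the paper argues differently: for each operational cell $w$ not lying in a good segment, the window $I_w$ of $\alpha$ cells starting at $w$ must contain at least $\alpha-\beta+1$ faulty cells, and each faulty cell can be so ``bound'' to at most $\beta-1$ operational cells; this charges uncovered operational cells to \emph{faulty} cells rather than to total length, giving $A\le\frac{\beta-1}{\alpha-\beta+1}f_sm$ uncovered cells and hence at least $\bigl(1-f_s-\frac{\beta-1}{\alpha-\beta+1}f_s\bigr)m$ covered ones, positive exactly when $\alpha>\frac{\beta-1}{1-f_s}$. Note, however, that the paper's charging is stated globally and simply does not confront the block-boundary and tail effects that your more careful per-block analysis exposes; so the obstruction you ran into is real and is quietly elided in the published proof rather than resolved by it. Your skeleton is sound, but as submitted the proof establishes a strictly weaker statement, and the missing idea (a segment count that survives when the per-cell surplus degenerates) is essential, not cosmetic.
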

\begin{proof}
For each operational cell~$w$, consider the contiguous segment $I_w$ of 
$\alpha$ cells starting at~$w$.
If $I_w$ contains at least $\beta$ operational cells then $w$ belongs to an
$\alpha$-$\beta$-good segment.
If $w$ does not belong to any $\alpha$-$\beta$-good segment then $I_w$ 
contains at least $\alpha-1-(\beta-2)=\alpha-\beta+1$ faulty cells,
we say that these faulty cells are {\em bound\/} to~$w$.
A faulty cell can be bound to at most $\beta-1$ operational cells, because
otherwise these operational cells would be within a segment of size at 
most~$\alpha$.
It follows that the number of faulty cells per each operational one that is not
in a $\alpha$-$\beta$-good segment, is at least 
$\frac{\alpha-\beta+1}{\beta-1}$.
Let $A$ be the number of operational cells that do not belong to
$\alpha$-$\beta$-good segments.
The total number of faulty cells is at least 
$A \frac{\alpha-\beta+1}{\beta-1}$.
We have 
\[
A\le \frac{\beta-1}{\alpha-\beta+1} f_s m\ ,
\]
because otherwise the total number of faulty cells would be at least
\[
A \frac{\alpha-\beta+1}{\beta-1}>f_s m\ .
\]
Since the total number of operational cells is at least $(1-f_s) m$,
we obtain that the number of operational cells in $\alpha$-$\beta$-good 
segments is at least 
$\bigl(1-f_s-\frac{\beta-1}{\alpha-\beta+1} f_s\bigr) m$.
One can check directly that if $\alpha>\frac{\beta-1}{1-f_s}$ then 
the following inequality holds:
\[
1-f_s-\frac{\beta-1}{\alpha-\beta+1} f_s>0\ .
\]
Hence if $\alpha>\frac{\beta-1}{1-f_s}$ then the number of 
$\alpha$-$\beta$-good segments is at least $\delta m$, 
where 
\[
\delta=\frac{1}{\alpha} \Bigl[1-\bigl(1-\frac{\beta-1}{\alpha-\beta+1}\bigr) 
f_s\Bigr]\ ,
\]
as was to be shown.
\end{proof}

Our simulation is defined for faulty PRAMs in which the numbers of processors,
memory cells and allowed faults satisfy certain dependencies, defined by
parameters fixed as constants, we call such PRAMs normal.
A precise definition is as follows:

Let $f_p>0$ and $f_s>0$ be two constants such that additionally $f_p+f_s<1$.
Let $\beta>1$ be an integer constant, its value needs to be sufficiently 
large to make the simulation mechanism work as described in detail in 
the following sections.
Let $\alpha>\max\{\beta,\frac{\beta-1}{1-f_s}\}$ be sufficiently large,
such that the conclusion of Lemma~\ref{l1} holds.
Let $\delta$ be the number proved to exist in Lemma~\ref{l2}.
A faulty PRAM with $n$ processors and $m$ shared memory cells 
is said to be {\em normal\/} if it has the following properties:
\begin{enumerate}
\item
The number $m$  satisfies the inequality~$m\ge \max\{\alpha n,1/\delta\}$.
\item
The number of faulty processors is at most $f_p n$.
\item
The number of faulty memory cells is at most $f_s m$.
\end{enumerate}

 From now on, when referring to a faulty PRAM, we mean a specific normal PRAM,
with all the values $f_p,f_s,\alpha,\beta$ fixed.
In view of Lemmas \ref{l1} and \ref{l2}, normal PRAMs have at least 
a constant fraction of $n$ of active processors 
and at least a constant fraction of $m$ of good segments.
The simulation we develop is for a normal PRAM, and the numeric values
of the constants $f_p,f_s,\alpha,\beta$ are used in the simulation. 
They are assumed to be a part of code of the simulating algorithm that 
is run by each processor, in particular they are known by the processors.
In the following sections we consider the constants $\alpha$ and 
$\beta$ as fixed and write shortly {\em good}, {\it active\/} and 
{\em dormant}, instead of long-winded phrases $\alpha$-$\beta$-good,  
$\alpha$-$\beta$-active, and $\alpha$-$\beta$-dormant, respectively.

The details of input/output operations for a PRAM have never been much
discussed, due mainly to the fact that the model ignores the real costs 
of communication between processors and memory modules.
Usually it is simply assumed that the input has been stored,
prior to the start of a computation,
either in the processors' local memories or in the shared memory.
Our approach is that the input is provided directly to the processors, 
since the shared memory is not reliable.
It would be hardly acceptable to assume that input operations start 
{\em after\/} the preprocessing phase has been completed.
In practice that would require a very flexible and complicated 
input hardware, since categorizing of processors into 
active and dormant is done during preprocessing, and  a processor may 
stay idle  during the computation either if it is faulty 
or its assigned segment of shared memory cells does not contain 
sufficiently many fault-free elements.
We assume that the input is provided uniformly to all the processors, 
whether faulty or not, {\em before\/} preprocessing begins.
Then operational processors store it in their local memories, 
and preprocessing starts.

\section{Preprocessing}
\label{preprocessing}

This section deals with setting up
a machine for the proper part of simulation.
We begin with a description of what preprocessing achieves.

Consider a normal faulty PRAM with $n$ processors and $m$ memory cells.
After it has completed preprocessing, the simulated machine has 
a range from $1$ through $m'$ of shared memory addresses available for 
computation, where $m'$ is a constant fraction of $m$, which depends on all 
the constants involved in the definition of a normal PRAM.
We refer to these addresses as {\em virtual} ones.
The physical cells storing the contents of virtual cells are
scattered throughout the active blocks of memory.
To have a small overhead of simulation, in terms of the time needed to 
locate physical addresses of the memory cells assigned to given virtual
addresses, both active processors and good segments are organized as
balanced binary trees.

\noindent
{\sc 1. Processors:}
Each operational processor knows if it is active or not.
All the active processors are ordered by their physical indices, 
and each processor knows its rank in this ordering.
We refer to the $i$th active processor as~$P_i$.
The number~$i$ is said to be the {\em active index\/} of processor~$P_i$,
and also the active index of its assigned block.
The active index should be distinguished from the corresponding physical 
one.
Active processors are organized into a complete binary tree. 
This is done by storing pointers at memory cells 
designated for this purpose by the processors.

\noindent
{\sc 2. Memory:}
Each active block has a list of disjoint good segments.
The first among them, that is, with the smallest physical address, 
is called the {\em contact segment}, 
and its address is the {\em contact address\/} of this block.
The remaining good segments are {\em regular\/} ones.
Each operational memory cell of a good segment has a purpose assigned to
it, which may depend on the kind of the segment.
The role of a cell may be either {\em structural}, when it stores 
information about the organization of memory of the simulator, for
instance a pointer value in a data structure, or it may provide
virtual storage, when it stores the content of some 
shared memory address of the simulated machine.
The specification of simulation we give determines implicitly how many 
structural cells are required in a good segment, and what they are to store, 
the remaining operational cells can be used for virtual storage.
We require the parameter $\beta$ to be so large that each good segment 
has at least one operational cell apart from the structural ones.

All good segments are organized as a complete binary tree called the
{\em memory tree}.
The in-ordering of segments in the memory tree
is the same as the ordering by their physical addresses,
that is, the addresses of nodes in the
left/right subtree of any node are smaller/larger than the address of 
this node.

All the contact segments are organized as a doubly-linked list.
Since there is a one-to-one correspondence between these segments and both the
active processors and active blocks, this list may be considered as both
a list of active processors and of such blocks.
The ordering of this list is exactly the same as that of the active
processors by their indices.

Each contact segment has a designated pair of records, which are used to 
build a complete binary tree, called the {\em processor tree}.
The structure of the tree is such that typically one of the records 
is a leaf of this tree, the other one is an internal node.
The ordering of the leaves of the processor tree according to the 
in-ordering is the same as their ordering by the indices 
of the corresponding processors.

We start a description of preprocessing by summarizing it 
as a sequence of stages on Figure~\ref{fig-1}.
Some of the terms used in this summary are defined later on 
in a detailed presentation of the stages.


\begin{figure}[t]
\rule{\textwidth}{0.75pt}
\begin{center}
\begin{minipage}{\boxwidth}
\begin{description}
\item[\sf Stage 1:] 
Each operational processor verifies if it is active: 
it scans its block looking for good segments, and organizes them in a list.
\item[\sf Stage 2:] 
Each active processor converts its list of good segments into a temporary 
tree.
\item[\sf Stage 3:] 
The active processors organize their active blocks into a list.
\item[\sf Stage 4:] 
The active processors convert the list of active blocks into a block 
tree.
\item[\sf Stage 5:] 
The active processors construct a processor tree from the list 
of active blocks.
\item[\sf Stage 6:] 
The active processors construct a memory tree. 
\item[\sf Stage 7:] 
A subset of the active processors run a decoding algorithm
to retrieve the input.
\end{description}
\end{minipage}
\FFF

\rule{\textwidth}{0.75pt}

\parbox{\captionwidth}{\caption{\label{fig-1}
Preprocessing in its seven main stages.}}
\end{center}
\end{figure}

After Stage~$6$, both the block tree and all the temporary trees 
are obsolete and may be discarded.
Next we discuss each stage in detail:

\noindent
{\bf Stage 1:} Each operational processor verifies if it is active.

To this end, it scans its assigned block of memory looking for good segments. 
As it moves through consecutive cells, it attempts to read each of them, 
checking to see if they are operational.
It also remembers the number of operational cells among the last $\alpha$
cells, not assigned to the previous good segments. 
Every time $\beta$ such cells are found, a new segment is added to the
list  of good segments, the first operational cell in a good segment storing 
the address of the next good segment, if any.
If no good segment is found then all the operational cells in a block are set 
to blank (all bits set to~$0$), 
otherwise the first such segment is the contact one, and
its first  cell stores a pointer to the next good segment if it exists,
or the null value if no such segment exists. 
In any case the value stored is distinct from  the  blank value.
If at least one good segment is found then the processor considers itself
active,  otherwise it is dormant.

 From now on, all the computations are performed by active processors
only, the dormant processors remaining completely idle.

\noindent
{\bf Stage 2:}
All the regular segments in every active block 
are arranged in a binary-search tree of height logarithmic in its size. 

The obtained tree is called this block's {\em temporary tree}.
This is also done sequentially in each block by the processor assigned 
to the block.
The in-ordering of regular segments in the temporary tree
is the same as their ordering by their physical addresses.
The total size of all the virtual storage in a block of the segments with 
smaller addresses than a given segment is this segment's {\em local offset}.
A regular segment stores the following in its structural cells: 
the offset and pointers to the children.
The local offsets can be computed and stored at each good segment in the list
while it is constructed.

A list of regular good segments is converted into a temporary tree
by the following {\em conversion procedure\/} that will be used also in 
future considerations.
It is partitioned into phases,
and can be performed either sequentially or in parallel,
depending on the number of available processors.
During this operation the list may be considered as storing trees, 
an item being a root of such a tree. 
In the beginning the trees are just single nodes.
A phase is performed as follows.
The list is partitioned into consecutive quadruples of nodes. 
Consider one of them.
It consists of tree $T_1$ followed by a single node $v_1$, followed by tree 
$T_2$, followed by a single node $v_2$, some possibly missing at the end of the
list.
The tree $T_1$ becomes the left child of $v_1$ and tree $T_2$ the right one,
then they are removed from the list, so that the tree with root $v_1$ is
immediately followed by $v_2$.
This is depicted in Figure~\ref{rys-1}.


\begin{figure}[t]
\begin{center}
\includegraphics[width=\pagewidth]{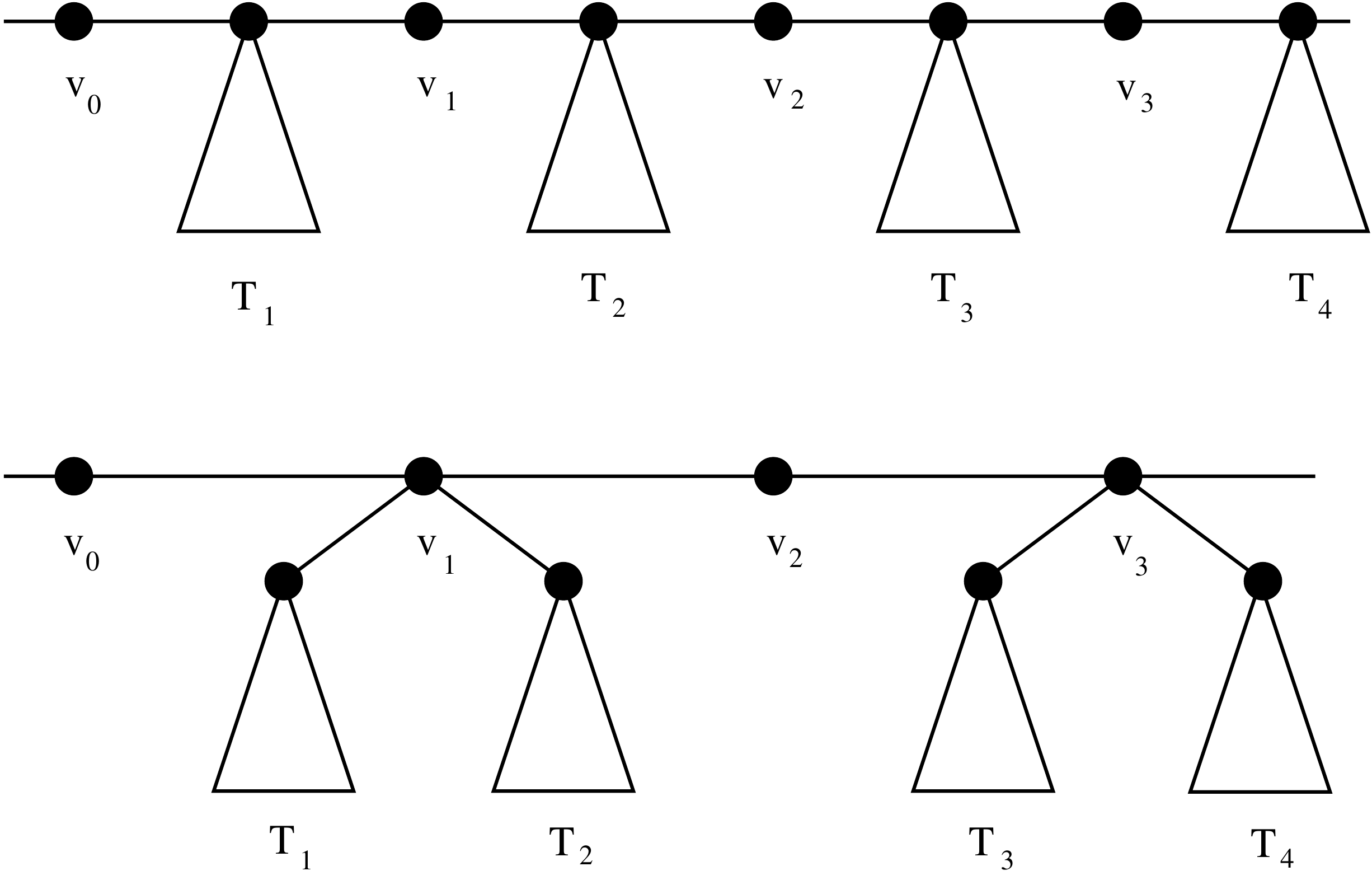}

~

\centerline{\parbox{\captionwidth}{\label{rys-1}
\caption{A phase of the conversion procedure. 
The upper list is transformed into the lower one.}}}
\end{center}
\end{figure}

After a number of phases, which is logarithmic in the length of the list,
just a single tree remains, also of depth logarithmic in the length of 
the list. 
This tree is the output of conversion.
In general, the time of conversion performed by just one processor
is proportional to the length of the input list, and if there is a processor
assigned to each item in the list then the task can be completed
in time logarithmic in the length of the list.
In the case of a temporary tree, its depth is $\cO(\log(m/n))$, and it is
obtained in time $\cO(m/n)$.

\noindent
{\bf Stage 3:}
A list of active blocks is built.

This list consists of contact segments.  
We use the following definitions.
A set $S$ of processors is {\em large\/} if it contains at least
$\frac{1}{2} (1-f_p-f_s) |S|$ active elements, where $|S|$ denotes 
the size of $S$.
A {\em partition tree\/} (PT in short) is
a full binary tree with nodes labeled by intervals of integers.
The root is labeled by $[1,n]$.
If a node $v$ is labeled by $[a,b]$ and $b>a+1$ then $v$ is an internal node,
its left child is labeled by $[a,\lfloor(a+b-1)/2\rfloor]$ and the right
child is labeled by $[\lceil(a+b)/2\rceil,b]$.
Leaves are labeled by intervals of unit length.
A processor with physical index~$i$ is {\em associated\/} with node 
$v$ if $i$ belongs to the label of~$v$.
The processors associated with a node of tree PT
are referred to as the {\em group\/} of this node.
A node is {\em large\/} if its group is large.
An active processor may {\em check\/} another processor $p$ for being active,
$p$ being operational or not.
This is done by scanning the block
assigned to $p$ and attempting to read each consecutive cell, until the
information that the block is active is found. This information is in the
first  operational non-blank cell in the block, which also happens to be
this  block's contact address.

The algorithm building the list works in phases corresponding to the levels 
of tree PT.
During a phase, processors are partitioned into groups 
associated with nodes at the corresponding level of the tree.
A group is either busy or not.
If a group is {\em busy\/} then:
\begin{enumerate}
\item
Each processor in the group is aware of this.
\item
The active processors in the group are organized in a list,
and each element of the list knows its {\em rank}, that is, the distance from
the head of the list.
\item
Each element knows the number of elements in the list.
\end{enumerate}
Groups correspond to leaves of PT during the first phase of the algorithm.
Processors in each such a group check each other, and if the group 
is large then the processors arrange themselves in a list, 
compute ranks and the group becomes busy.
In general, once a phase has been completed, the computation proceeds to the
next one, corresponding to the next level of PT towards the root.
Let us consider such a phase, it consists of three parts.
During the first part, the busy groups corresponding to nodes being left
children perform computations, while the other groups pause. 
The busy groups corresponding to right children perform computations
in the second part, 
but only if their left-sibling group was idle during part one.  
Consider the first part.
Each processor in a busy group $G_1$ is assigned a set of processors in the
sibling group $G_2$ as follows.
The processors in $G_2$ are partitioned into as many subsets as there are
active elements in $G_1$, the sizes of subsets are constant ($G_1$ is large)
and differ by at most one.
The active processor with rank $k$ in the list in $G_1$
is assigned the $k$th subset of group~$G_2$.
Each active processor of $G_1$ checks on the assigned subset and adds all
its active blocks to the list.
This completes part one.
The second part is similar, but the roles between sibling
groups are reversed.
For each pair of sibling groups, if at least one of the groups is busy then
after part two all the active processors in these
groups are connected in a combined list.
This list is processed during the third part.
The ranks and the size of the list are computed and broadcast to all the
elements of the list by the procedure of pointer jumping (see~\cite{JaJa92-book}).
If the size of the list is big enough for the group of the parent node
to become large then the group of the parent node becomes busy in the next 
phase.
To synchronize the process, we need to control the duration of the first 
two parts of a phase, when the processors check on the assigned elements 
of sibling groups.
This time period can be bounded by a common constant 
since all the busy groups are large.
\begin{lemma}
\label{l3}
It takes time $\cO\left(\log n\left(\frac{m}{n}+\log n\right)\right)$ 
to produce a list connecting all the active blocks.
\end{lemma}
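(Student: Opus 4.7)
My plan is to bound the time taken by the algorithm phase by phase, using the structure of the partition tree PT and the definition of ``large'' group. Since PT is a complete binary tree over $[1,n]$, it has $\cO(\log n)$ levels, so the algorithm runs for $\cO(\log n)$ phases. I will show that each phase costs $\cO(m/n + \log n)$ time, which then multiplies to give the claimed bound.

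Consider a single phase. I decompose its cost into the two checking parts and the pointer-jumping part. For the checking parts, fix a busy group $G_1$ that is to inspect its sibling $G_2$. Since $G_1$ and $G_2$ are sibling nodes of PT at the same level, $|G_1|=|G_2|$ up to a difference of one. Because $G_1$ is busy it is large, so it contains at least $\tfrac{1}{2}(1-f_p-f_s)|G_1|$ active processors. The processors of $G_2$ are then partitioned among those active members of $G_1$ into subsets of size $\cO(|G_2|/|G_1|)=\cO(1)$, and each active processor of $G_1$ must check its assigned constant number of processors in $G_2$. Checking a single processor amounts to scanning its assigned block of $\lfloor m/n \rfloor$ cells until an operational non-blank cell (the contact address) is found, which in the worst case takes $\cO(m/n)$ time. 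Hence each of the two checking parts takes $\cO(m/n)$ time, and by the synchronization argument already noted in the paper this duration is a common bound for all pairs of sibling groups.

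The third part of each phase uses pointer jumping on the combined list to compute ranks and the list length. Since the list has at most $n$ elements, pointer jumping runs in $\cO(\log n)$ time. Consequently the total cost of one phase is $\cO(m/n + \log n)$. Summing over the $\cO(\log n)$ phases (one per level of PT) gives the stated bound of $\cO(\log n \, (m/n + \log n))$. The only subtle step is the constant-size bound on the subset each active processor must check; this is what makes the checking phase independent of the current level and depends crucially on the fact that busy groups are large, which is built into the invariant maintained by the algorithm as it moves upward in PT.
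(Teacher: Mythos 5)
Your timing analysis is correct and is essentially the paper's: $\cO(\log n)$ phases, each costing $\cO(m/n)$ for the two checking parts (constant-size subsets per active processor because busy groups are large, $\cO(m/n)$ per block scan) plus $\cO(\log n)$ for pointer jumping. That half of the argument is fine.

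What is missing is the correctness half of the lemma: you never show that the algorithm actually terminates with a \emph{single} list connecting \emph{all} the active blocks. You defer to ``the invariant maintained by the algorithm,'' but the invariant only describes what holds \emph{if} a group is busy; it does not by itself guarantee that any group at a given level is busy, nor that the root's group ever becomes busy. If at some level no group were busy, the process would stall and no combined list would emerge. The paper closes this gap as follows: by Lemma~\ref{l1} the root of PT is large, and if a node of PT is large then at least one of its two children is large (the active processors of a large group cannot be split so that both halves fall below the required fraction of their sizes). Hence there is a root-to-leaf path in PT consisting entirely of large nodes; the leaf on this path is busy in the first phase, and by induction up this path each node's group becomes busy in its phase --- since a busy group absorbs all active processors of its sibling's interval into the combined list, the parent's list contains all active processors of the parent's interval, which (the parent being large) suffices for the parent to become busy. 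In particular the root becomes busy, and its list contains all active processors, i.e., all active blocks. Without some version of this induction, the statement ``to produce a list connecting all the active blocks'' is not established.
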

\begin{proof}
Note that the root of PT is large by Lemma~\ref{l1}, and if a node of PT 
is large then one of its children is large.
Therefore there is a large leaf in PT such that the path from it to the root 
contains only large nodes.
This means that there is at least one busy group in each phase,
and the correctness follows by induction on the height of PT.

The number of phases is $\cO(\log n)$.
The first two parts of a phase take time $\cO(m/n)$ each, the third one takes 
time proportional to the logarithm of the size of the list, 
which is $\cO(\log n)$.
\end{proof}

\noindent
{\bf Stage 4:}
The active blocks are organized as a binary-search tree of logarithmic height.

This tree is called the {\em block tree}.
The in-ordering of active blocks in this tree, by their active indices,
is to be the same as their ordering by physical addresses.
The total size of virtual storage in all the blocks with active indices 
smaller than a given active block is called this block's {\em global offset}.
All the information about a block as a node in
the block tree is stored in the contact segment of this block.
Namely, this segment stores the following in its structural cells: 
the global offset, 
pointers to the children and to the parent in the block tree, and a
pointer to the root of the temporary tree.
A pointer to an active block is meant to be the address of the contact segment 
of this block.
The processors have already been organized into a list of contact segments.
The conversion procedure is applied to this list,
it takes $\cO(\log n)$ steps.
Then the global offsets of the nodes in the obtained tree are computed.
This is done in two sweeps through the levels of the tree.
The first one proceeds from the leaves towards the root. 
After it is finished then each node knows the size of virtual memory 
in all of its descendants.
The offsets are computed during the second sweep starting at the root and going
towards the leaves: a node needs to know the offset of its parent, and the size
of the virtual memory in the subtree of its left sibling, if there is any.

\noindent
{\bf Stage 5:}
A tree of processors is built.

This tree is called simply the {\em processor tree}.
We may assume that the number of active processors is a power of~$2$,
otherwise only $2^k$ active processors are considered, 
where the number $k$ is the largest one such that $2^k$ is upper bounded by the
number of active processors.
Every active processor has a number of cells in its contact 
segment assigned for the purpose of building the tree.
They are organized as two records of size suitable to store all the 
information of a node, each is said to be {\em owned\/} by the processor. 
One of them is designated to be a leaf in the tree, the other to be an 
internal node. 
When this stage starts then all the records to be internal nodes
are initialized to {\bf nil} values.
Each of them is called {\em spare\/} while holding such a value.
Since the list of contact segments has already been built, 
this also yields a list structure of the records assigned to be leaves.
Inductively, this list may be interpreted as having the following properties:
(i)~it connects trees, (ii)~the root of each tree in the list 
has a spare record.
Initially the trees are just singleton nodes.
The list is contracted in phases as follows.
The current list is partitioned into pairs of consecutive trees.
If $T_1$ and $T_2$ are such a pair then the spare record of $T_1$ becomes
the root of a new tree that replaces both $T_1$ and $T_2$ in the list,
it has $T_1$ as the left subtree and $T_2$ as the right one.
The spare record of $T_2$ becomes the spare one of the new tree.
This is performed for each such pair in parallel, which results in
contracting the length of the list by half.
After $k$ phases a single tree is obtained, which is the processor tree.

\noindent
{\bf Stage 6:}
A memory tree is constructed.

First the active processors are dispersed among good segments in a balanced
way, that is, they locate such segments that the distance, 
measured in terms of offsets, between any two consecutive ones is the 
same, and is $\Theta(m/n)$. 
This is achieved as follows.
A copy of the list of processors is first made, by way of making copies 
of the contents of the relevant cells in the contact segments of each 
processor.
It is then  dispatched  towards the leaves of the block trees, 
starting from the root.
While on its way, this list may be partitioned into smaller lists, they are 
called {\em traveling lists}.
These lists travel in a binary-search fashion, and when a leaf of the block
tree is encountered then they continue their way through the respective
temporary tree.
Let $L$ be such a list, and let it be at node $v$ 
of either the block tree or a temporary tree. 
The processor $P$ at the head of $L$ reads the offsets of both $v$ and its left
child, and calculates the smallest virtual address in the right child of~$v$.
This information is then broadcast to all the processors on list~$L$ by 
pointer jumping, in time proportional to the logarithm of the size of the list.
If a processor in the list is the first one with its address in the right
subtree of~$v$ then it cuts itself from the predecessor, if any, and becomes
the head of a new list consisting of the processors that follow it in the
original list.
This new list moves down to the right child of $v$, the remaining shortened
one goes to the left child.
Eventually the traveling lists become single nodes which traverse either
the block tree or temporary trees in a binary-search fashion.     

The second part of this stage starts from the processors 
dispersed among the items of the list of good segments in a balanced way, the
first processor at the first node of the list.
The good segments between that of processor $P_i$ and that of processor
$P_{i+1}$ are treated as the portion of segments assigned to
processor~$P_i$.
The sizes of such portions are $2^r$, for some $2^r=\Theta(m/n)$.
Every active processor can be assumed to know the number~$r$.
Each processor organizes $2^r$ cells of its initial portion of good segments 
into a complete binary-search tree. 
The list of active processors yields a list of these trees.
We apply the conversion procedure and obtain a complete 
binary-search tree, and this is the memory tree.

\begin{lemma}
\label{l-loglog}
It takes $\cO(\log n \log m)$ steps to build the memory tree.
\end{lemma}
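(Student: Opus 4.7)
The plan is to bound the running times of the two parts of Stage 6 separately and verify that each fits inside $\cO(\log n \log m)$.

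For the first part (dispersing the active processors through the trees), I would track a traveling list as it descends. Beginning at the root of the block tree, the head processor of a list $L$ at a node $v$ reads the offset stored at $v$ and at $v$'s left child and thereby computes the threshold virtual address separating the two children's subtrees. This threshold must be broadcast to every processor on $L$, which is done by pointer jumping along $L$; since $L$ has at most $n$ elements, one broadcast costs $\cO(\log n)$. After the broadcast, each processor locally determines whether it falls into the left or right subtree, and one split creates two new lists descending to the two children. Thus the cost of handling a node is $\cO(\log n)$, regardless of the size of $L$. The traveling lists descend through the block tree of depth $\cO(\log n)$ and then continue through a temporary tree of depth $\cO(\log(m/n))$, for a total descent of $\cO(\log n + \log(m/n)) = \cO(\log m)$. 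Multiplying per-level cost by depth gives $\cO(\log n \log m)$ for the entire dispersion phase.

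For the second part, I would argue that once each processor $P_i$ has located its portion of $2^r = \Theta(m/n)$ consecutive good segments on the list (already ordered by physical address), assembling a complete binary-search tree out of them is an operation whose ``shape'' is completely determined by position within the sorted list. The processor sets up the structural pointers for its local tree, and the $n$ resulting local trees — now hanging off consecutive items of the list of active processors — are merged by invoking the conversion procedure of Stage~2. Because the conversion procedure halves the length of the list in each phase and each phase is executed in parallel by all processors, it terminates in $\cO(\log n)$ phases at $\cO(1)$ time per phase, producing the memory tree.

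The main obstacle is verifying the second part genuinely fits inside the claimed bound rather than incurring an additional $\cO(m/n)$ term for the per-processor local construction; the argument relies on the fact that since the portion is a contiguous sorted run of segments of known size $2^r$, the complete binary-search tree over it has a positional description that the processor can install in $\cO(\log(m/n))$ rounds by the analogue of the conversion procedure applied to its own sublist. Combining this with the $\cO(\log n)$ for the global merger, Part~2 costs $\cO(\log n + \log(m/n)) = \cO(\log m)$, which is dominated by the $\cO(\log n \log m)$ bound from Part~1. Summing the two parts yields the stated $\cO(\log n \log m)$ time.
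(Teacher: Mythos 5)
Your first part reproduces the paper's argument exactly: the cost of handling one node of the descent is the $\cO(\log n)$ pointer-jumping broadcast along a traveling list, the descent has depth $\cO(\log n)+\cO(\log(m/n))=\cO(\log m)$ through the block tree and then a temporary tree, and the product gives $\cO(\log n\log m)$. That is precisely the two-sentence proof the paper gives, so this part is fine.

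The gap is in your second part. You assert that a single processor can install a complete binary-search tree over its portion of $2^r=\Theta(m/n)$ good segments in $\cO(\log(m/n))$ rounds ``by the analogue of the conversion procedure applied to its own sublist.'' The conversion procedure has $\cO(\log \ell)$ \emph{phases} on a list of length $\ell$, but a phase is only $\cO(1)$ time when there is a processor per list item; executed by one processor, a phase costs time proportional to the number of items it must visit, and the total is $\Theta(\ell)=\Theta(m/n)$. More fundamentally, the local tree has $\Theta(m/n)$ nodes sitting at irregular physical addresses (they are good segments scattered through faulty memory), so $\Theta(m/n)$ structural pointers must be physically written, and one processor cannot do that in $o(m/n)$ steps no matter how ``positional'' the shape of the tree is. So your claimed Part~2 bound of $\cO(\log n+\log(m/n))$ does not hold; the honest bound for the local construction is $\cO(m/n)$, which is \emph{not} dominated by $\cO(\log n\log m)$ when $m/n\gg\log n\log m$. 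To be fair, the paper's own proof of this lemma silently ignores the second part as well (it only accounts for the traveling lists), and any residual $\cO(m/n)$ term is harmless for Theorem~\ref{t-1}, whose bound $\cO(\log n(\frac{m}{n}+\log n))$ absorbs it; but your write-up explicitly flags this as ``the main obstacle'' and then resolves it with a claim that is false as stated, so the resolution needs to be either a genuinely sublinear construction (e.g., reusing the temporary trees already paid for in Stage~2) or an admission that the term is charged elsewhere.
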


\begin{proof}
Each partitioning of a traveling list takes $\cO(\log n)$ steps.
The length of the path from the root of the block tree to a leaf of a temporary
tree in a block is $\cO(\log m)$.
\end{proof}

\noindent
{\bf Stage 7:}
A decoding algorithm is run.

By Lemma~\ref{l1}, the number of active processors is at least 
$\gamma n$, where $\gamma=\frac{1-(f_p+f_3)}{2}$.  
All of them are arranged in a list corresponding to the list of blocks.
We want to provide the first $\gamma n$ processors in this list
with the whole input distributed among all~$n$ processors, 
active or not.
To this end we use a variant of information dispersal method, 
based on an idea similar to that 
underlying the Reed-Solomon error-correcting codes (\cite{Lyuu93-book,Preparata89}). 
The original input is first partitioned into $\gamma n$ {\em input
strings\/} $u_1$, $\ldots$, $u_{\gamma n}$, each comprised of some $t$ bits.
These input strings will be interpreted as elements of the finite field
$\mF=\text{GF}(2^t)$.
We take the smallest~$t$ such that the inequality $2^t>\gamma n$
holds.
We assume that the input is encoded as a sequence of $n$ packets 
$\langle 1,v_1\rangle$, 
$\langle 2,v_2\rangle$, $\ldots$, $\langle n,v_n\rangle$, and packet 
$\langle i,v_i \rangle$ is delivered to the processor with the physical
index~$i$, faulty or not.
Each packet $\langle i,v_i \rangle$ is a pair of {\em identifier\/ $i$} and
{\em value\/ $v_i$}.
The values $v_i$ are sequences of $t$ bits each, 
they are interpreted as elements of field $\mF$, 
with the arithmetic of operations on them defined as in $\mF$.
The values satisfy $v_i=g(\omega^i)$, 
where the following holds:
\begin{enumerate}
\item
$\omega$ is a primitive element of $\mF$;
\item
$g(x)$ is the following polynomial in $\mF[x]$:
\[
g(x)=u_1+u_2x+\ldots+u_{\gamma n}x^{\gamma n -1}\ .
\]
\end{enumerate}
Note that in order to encode the input we need to evaluate a polynomial at a
subset of $\mF$ of size $\Theta(2^t)$.
The task of encoding is equivalent to computing the Discrete Fourier Transform
and can be implemented efficiently by the FFT algorithm.
Suppose that $\langle i_1,v_{i_1}\rangle$, $\ldots$, 
$\langle i_\gamma, v_{i_{\gamma n}}\rangle$ are the input packets stored by
the initial segment of $\gamma n$ active processors in the list (in any order).
This gives $\gamma n$ values at distinct points, which are determined by the
identifiers of packets of a polynomial of degree $\gamma n -1$.
The task of retrieving the input $u_1$, $\ldots$, $u_{\gamma n}$ is equivalent
to obtaining the coefficients of a polynomial from its values, that is, to
interpolating the polynomial.
There is an algorithm for this problem that runs in time $\cO(\log^3n )$ on an
EREW PRAM with $n$ processors (see~\cite{JaJa92-book} and the references therein).
The algorithm computes the Lagrange interpolation formula and is reduced to
polynomial evaluation at $\cO(n)$ points and then to the FFT algorithm.
This algorithm specialized to our problem can be implemented to run faster,
even on the butterfly
(see \cite{Leighton1992} for the description and properties of the butterfly).
By a {\em normal butterfly algorithm\/} we mean an algorithm in which a step 
of computation is performed by the nodes on one level, and the consecutive 
levels are used in a cyclic fashion.
\begin{lemma}
\label{decoding}
Decoding of the whole input from $\gamma n$
packet values can be performed on the $\cO(\log n)$-dimensional butterfly in 
time $\cO(\log^2 n)$ by a normal algorithm.
\end{lemma}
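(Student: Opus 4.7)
The plan is to cast the decoding as polynomial interpolation at $N=\gamma n$ points, then execute the three classical rounds of divide-and-conquer on a balanced binary tree of depth $\cO(\log n)$ over the evaluation points, implementing every level by one batched FFT of size $\Theta(N)$ on the butterfly.

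Write $\alpha_k=\omega^{i_k}$ and $\Pi(x)=\prod_{k=1}^N (x-\alpha_k)$. The Lagrange identity
\[
g(x)=\Pi(x)\sum_{k=1}^N \frac{v_{i_k}}{\Pi'(\alpha_k)\,(x-\alpha_k)}
\]
reduces the recovery of $u_1,\ldots,u_N$ to three subtasks: (a) compute the coefficients of $\Pi(x)$; (b) evaluate $\Pi'(x)$ simultaneously at $\alpha_1,\ldots,\alpha_N$; (c) combine the weighted fractions into the coefficients of $g(x)$. For (a) I would build a \emph{subproduct tree} bottom-up: the leaves hold the linear factors $x-\alpha_k$, and each internal node stores the product of its two children. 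At depth $d$ from the leaves there are $N/2^d$ polynomials of degree $2^d$, so every level contains a batch of independent polynomial multiplications whose sizes sum to $\cO(N)$; these are dispatched by a single size-$\cO(N)$ FFT-based multiplication. Subtask (b) is a top-down sweep on the same tree: at each node reduce the polynomial inherited from the parent modulo the child's subproduct, which is again $\cO(1)$ FFTs per level of total size $\cO(N)$. Subtask (c) is another bottom-up pass in which a node whose children store rational fractions $A_L/\Pi_L$ and $A_R/\Pi_R$ keeps the merged numerator $A_L\Pi_R+A_R\Pi_L$; this is once more dominated by $\cO(1)$ FFTs per level.

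The core subroutine is the Cooley--Tukey FFT of size $\Theta(N)$, which a $(\log N+\cO(1))$-dimensional butterfly executes as a normal algorithm in $\cO(\log n)$ steps, one butterfly level per clock tick. With $\cO(\log n)$ divide-and-conquer levels and $\cO(\log n)$ butterfly steps per level, the total time is $\cO(\log^2 n)$, as claimed.

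The main obstacle is verifying the \emph{normal} discipline: at depth $d$ of the divide-and-conquer tree one must execute $N/2^d$ independent FFTs of size $\Theta(2^d)$ simultaneously, yet only a single butterfly is available and its levels must be visited cyclically without ad hoc rewiring. I would resolve this by the standard nesting: assign each independent subproblem to a contiguous block of $\Theta(2^d)$ columns, let the low $d$ dimensions of the butterfly carry out the small FFT inside each block, and let the high $\log N-d$ dimensions serve merely as inter-block identifiers that are idle during that batch. The union of these actions across all blocks is exactly one size-$N$ normal butterfly pass, so a fixed ordering of the $\cO(\log n)$ divide-and-conquer phases lines up with consecutive cyclic traversals of the butterfly levels, meeting the definition of a normal algorithm.
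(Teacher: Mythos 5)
Your route is the fully generic fast-interpolation algorithm (subproduct tree, top-down remaindering, bottom-up recombination of fractions), whereas the paper's proof hinges on a special feature of the instance that you never use: the interpolation points are powers of a primitive element $\omega$ of $\mF=\text{GF}(2^t)$ and form a constant fraction of the \emph{entire} field. Consequently the multipoint-evaluation subroutine (your subtask (b), evaluating $\Pi'$ at the $\alpha_k$) is replaced in the paper by a \emph{single} DFT over the full multiplicative group --- evaluate at all powers of $\omega$ and keep the values that are needed --- which costs $\cO(\log n)$ on the butterfly; the paper explicitly identifies this as the step where a $\log n$ factor is saved over the generic $\cO(\log^2 n)$ multipoint evaluation. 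The remaining communication is the FFT pattern plus a full-binary-tree pattern, mapped onto the butterfly as a normal algorithm, much as you describe.

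There are two concrete gaps in your generic route. First, ``reduce the polynomial inherited from the parent modulo the child's subproduct, which is again $\cO(1)$ FFTs per level'' is not justified: fast division with remainder needs the inverse of the reversed divisor modulo $x^{2^d}$, and computing that inverse by Newton iteration costs $\cO(\log n)$ FFT rounds per level, which degrades your bound to $\cO(\log^3 n)$ unless all these inverses are precomputed bottom-up along the subproduct tree (this can be done with $\cO(1)$ extra FFTs per level, but it is a step you must supply). Second, and more fundamentally, the radix-2 Cooley--Tukey FFTs of size $2^d$ that your batching scheme invokes at every level do not exist over $\mF$: the multiplicative group of $\text{GF}(2^t)$ has odd order $2^t-1$, and in characteristic $2$ the only root of unity of power-of-two order is $1$, so there is no primitive root on which to run a size-$2^d$ transform. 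The only DFT naturally available is the one of size $2^t-1$ at the powers of $\omega$ --- exactly the transform the paper uses --- so your per-level polynomial multiplications would have to be rerouted through full-field DFTs or a characteristic-2 multiplication algorithm, after which the column-block embedding that underpins your ``normal discipline'' argument no longer applies as stated. Exploiting the full-field DFT, as the paper does, sidesteps both issues at once.
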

\begin{proof}
A general interpolation algorithm resorts to an algorithm of evaluating a
polynomial at $\cO(n)$ points, which runs in time $\cO(\log^2n)$. 
This algorithm is replaced by an algorithm to evaluate polynomials at the 
powers of a primitive element of a finite field.
Note that we need to evaluate the polynomials at some bounded fraction of 
{\em all\/} the elements of the field, which can be performed in time 
$ \cO(\log n)$ by the FFT algorithm; in this way we gain a $\log n$-factor.
The communication of processors is that needed for the FFT algorithm and of a
full-binary-tree pattern, and can be implemented on the butterfly as a normal
algorithm.
\end{proof}
We can adapt this algorithm, due to the following:
\begin{lemma}
\label{butterfly}
A normal butterfly algorithm can be implemented on a list of active processors
with delay $\cO(1)$, provided the size of the list is at least equal to the
number of nodes in one level of the butterfly.
\end{lemma}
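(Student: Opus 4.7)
\noindent
My plan is to embed the butterfly columns onto the list of active processors by rank, so that butterfly column~$j$ corresponds to the active processor of rank~$j$, and to show that the processor tree built in Stage~5 provides the structure needed to perform each butterfly step in constant time. Since the list has size at least $2^k$, where $2^k$ is the width of one level of the butterfly, every column is covered.

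Butterfly partners at level~$i$ are ranks differing only in bit~$i$, and in the processor tree---a complete binary tree of depth~$k$ whose leaves appear in rank order---such a pair has a common ancestor at depth $k-1-i$. The first step of the proof is to augment Stage~5 so that, during the bottom-up construction of the tree, each processor precomputes and stores in its contact segment a table of $\cO(\log n)$ pointers, one per butterfly level, to the physical address of its partner. This table is built by a single upward sweep through ancestors combined with pointer jumping; the time is $\cO(\log^2 n)$ and the space is $\cO(\log n)$ cells per processor, which fit within the preprocessing bound $\cO((m/n+\log n)\log n)$ once we use the $\Theta(m/n)$ good segments allocated to each active block.

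With this table in place, one step of a normal butterfly algorithm at level~$i$ reduces to a constant-time exchange: each processor reads its partner's value through the stored pointer, applies the butterfly combining function, and writes the result into its own cell. Because the butterfly pairing at any single level is a perfect matching on disjoint ranks, no memory cell is touched by two processors during the same step, so EREW exclusivity is preserved. The main obstacle I anticipate is arranging the precomputation cleanly within the tree-building code of Stage~5; the hardest case is level~$i$ close to $k-1$, where a partner's address must be propagated across the full height of the tree, but this is resolved by the same pointer-jumping primitive already used elsewhere in the preprocessing.
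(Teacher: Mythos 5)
Your reduction of each butterfly step to a constant-time exchange with a precomputed partner pointer is fine as far as the communication pattern goes, but the precomputed table is where the argument breaks against this paper's model. You propose to store $\cO(\log n)$ partner pointers per processor, ``in its contact segment'' or, failing that, in the good segments of its block. A contact segment is an $\alpha$-$\beta$-good segment with $\beta$ a fixed constant, so it holds only $\cO(1)$ values; local memories are likewise assumed to consist of a constant number of cells; and normality only guarantees $m\ge\alpha n$ for a constant $\alpha$, so when $m=\Theta(n)$ an active block contains only $\cO(1)$ good segments and hence $\cO(1)$ shared cells. In that regime there is nowhere to put a $\Theta(\log n)$-entry table per processor (nor $\Theta(n\log n)$ cells in total), and the bound $\cO((\frac{m}{n}+\log n)\log n)$ you invoke to justify it is a time bound, not a space bound. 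The detour through lowest common ancestors in the processor tree is also unnecessary: the partner of rank $j$ at level $i$ is simply the list element at distance $2^i$, forward or backward according to bit $i$ of $j$.

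The paper's proof avoids the table entirely by exploiting the definition of a \emph{normal} butterfly algorithm: consecutive steps use consecutive levels cyclically, so the partner of row $j$ at successive steps sits at list distance $2^0,2^1,2^2,\dots$, and the pointer needed at step $i+1$ is obtained from the pointer held at step $i$ by a single pointer jump along the two-directional cyclic list built during preprocessing. Only $\cO(1)$ pointers are ever held per processor, the per-step overhead is $\cO(1)$, and no extra precomputation or storage is needed. If you replace your table by this on-the-fly doubling (noting that when the level index wraps around, the required distance resets to $1$, i.e., to the direct list link), your argument goes through; the exchange itself is EREW-safe as you observe, provided each processor keeps its own current value locally so that its shared cell is read only by its partner.
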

\begin{proof}
We need a two-directional cyclic list of length equal exactly to the size of a
level of the butterfly, but this has been taken care of during preprocessing.
Each processor simulates a row of the butterfly.
The connections to other rows can be obtained dynamically by pointer jumping
along the links of the list of active processors.
A processor chooses the particular link depending on the bit representation of
the row number.
\end{proof}

\begin{lemma}
\label{lem-retrieve}
The input available to the active processors 
can be retrieved in time $\cO(\log^2 n)$ on an EREW PRAM. 
\end{lemma}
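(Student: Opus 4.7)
The plan is to combine Lemmas~\ref{decoding} and~\ref{butterfly} with the list of active processors already produced by Stage~6 of preprocessing. After Stage~6, the active processors form a doubly-linked list of length at least $\gamma n$ (by Lemma~\ref{l1}), and each operational processor still holds in its local memory the packet $\langle i, v_i\rangle$ delivered to it uniformly before preprocessing began. In particular, the active processors collectively hold at least $\gamma n$ evaluations of the polynomial $g(x)$ at distinct points $\omega^{i_1},\ldots,\omega^{i_{\gamma n}}$, which, by the degree bound on $g$, uniquely determine all coefficients $u_1,\ldots,u_{\gamma n}$, i.e., the full original input.

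First I would trim the list to its initial $2^k$ active processors, where $2^k$ is the largest power of two not exceeding $\gamma n$. The trimming discards only a constant fraction of the packets (so that a constant fraction of input samples remain, which is all that Lemma~\ref{decoding} requires), and it makes the length of the trimmed list equal to the number of nodes in one level of an $\cO(\log n)$-dimensional butterfly.

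Next, Lemma~\ref{decoding} supplies a normal butterfly algorithm that decodes the input from these packet values in $\cO(\log^2 n)$ steps on the $\cO(\log n)$-dimensional butterfly. By Lemma~\ref{butterfly}, this normal butterfly algorithm can be simulated on the trimmed list of active processors with $\cO(1)$ delay per butterfly step, because the list length matches the width of one level of the butterfly. Composing the two lemmas yields a retrieval procedure running in time $\cO(\log^2 n)$, which is the claimed bound.

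The main obstacle, already handled inside the proof of Lemma~\ref{butterfly}, is the EREW restriction: the cross-level communication pattern of the butterfly must be realized along the links of the doubly-linked list without any concurrent reads or writes. This is accomplished by having each processor identify its counterpart in the next butterfly level from the bit representation of its row index and follow the appropriate list link, so that every butterfly step costs only $\cO(1)$ list operations. Once this simulation is in place, only routine bookkeeping (identifying the prefix of $2^k$ active processors along the already-built list, and broadcasting the recovered coefficients to those that will store them) remains, all of which fits comfortably within the $\cO(\log^2 n)$ budget.
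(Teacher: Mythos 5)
Your approach is exactly the paper's: the published proof of Lemma~\ref{lem-retrieve} is a single sentence stating that the bound follows from Lemmas~\ref{decoding} and~\ref{butterfly}, and your write-up is that composition spelled out. One point needs correction, however. You justify trimming the list to $2^k$ processors by asserting that ``a constant fraction of input samples \ldots is all that Lemma~\ref{decoding} requires.'' That is not what the lemma says, and it cannot be relaxed in this way: the polynomial $g$ has $\gamma n$ coefficients $u_1,\ldots,u_{\gamma n}$, hence degree $\gamma n-1$, so it is determined by its values at $\gamma n$ distinct points and by no fewer. Discarding packets at decoding time down to $2^k<\gamma n$ leaves an underdetermined interpolation problem, and the whole point of Stage~7 is to recover \emph{all} of the coefficients. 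The rounding to a power of two has to be absorbed into the encoding parameters instead --- for instance, partition the input into $2^k$ strings where $2^k$ is the largest power of two not exceeding the guaranteed number $\gamma n$ of active processors from Lemma~\ref{l1}, so that the degree of $g$ matches the number of samples that the surviving prefix of the list is certain to hold. With that adjustment (which the paper itself leaves implicit), the rest of your argument --- realizing the normal butterfly algorithm on the list with $\cO(1)$ delay per level via Lemma~\ref{butterfly}, and charging $\cO(\log n)$ butterfly levels over $\cO(\log n)$ rounds of the interpolation from Lemma~\ref{decoding} --- is sound and yields the claimed $\cO(\log^2 n)$ bound.
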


\begin{proof}
Time evaluation of Stage~7 follows from Lemmas~\ref{decoding}
and~\ref{butterfly}. 
\end{proof}

\begin{theorem}
\label{t-1}
The preprocessing performed by a normal PRAM with $n$ processors and $m$ 
memory cells takes time 
$\cO\left(\log n\left(\frac{m}{n}+\log n\right)\right)$.
\end{theorem}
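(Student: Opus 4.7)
The plan is simply to add up the time bounds contributed by the seven stages listed in Figure~\ref{fig-1} and observe that one of them dominates the sum. Lemmas~\ref{l3}, \ref{l-loglog}, and~\ref{lem-retrieve} already do most of the work, so the proof is essentially bookkeeping.

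First I would handle the stages whose time bounds have not yet been formally stated. Stage~1 is done sequentially in each block by the processor assigned to it; since each block has $\lfloor m/n\rfloor$ cells and the processor only scans and maintains a running count of good segments, this takes $\cO(m/n)$ time in parallel across all processors. Stage~2 invokes the conversion procedure on a list of length $\cO(m/n)$ within a single block, performed sequentially by the one processor owning that block, so again $\cO(m/n)$ time. Stages~4 and~5 operate only on the (global) list of active processors: Stage~4 applies the conversion procedure to a list of length $\cO(n)$ in $\cO(\log n)$ phases, followed by two sweeps through the resulting tree of depth $\cO(\log n)$ to propagate subtree sizes and offsets; Stage~5 performs $\cO(\log n)$ list-halving phases, each taking constant time. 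So both Stages~4 and~5 cost $\cO(\log n)$.

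Next I would invoke the three lemmas directly. Stage~3 contributes $\cO(\log n(m/n + \log n))$ by Lemma~\ref{l3}. Stage~6 contributes $\cO(\log n\,\log m)$ by Lemma~\ref{l-loglog}, and Stage~7 contributes $\cO(\log^2 n)$ by Lemma~\ref{lem-retrieve}. Summing yields
\[
\cO\Bigl(\tfrac{m}{n} + \log n\bigl(\tfrac{m}{n}+\log n\bigr) + \log n + \log n\,\log m + \log^2 n\Bigr).
\]

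The one non-trivial cosmetic step—the only thing that is more than pure addition—is folding $\log n\,\log m$ into the desired bound. Since $m\ge\alpha n$ with $\alpha$ a fixed constant, $\log m = \log n + \log(m/n) + \cO(1)$, and $\log(m/n)\le m/n$, so
\[
\log n\,\log m = \cO\bigl(\log^2 n + \log n\cdot \tfrac{m}{n}\bigr)
= \cO\Bigl(\log n\bigl(\tfrac{m}{n}+\log n\bigr)\Bigr).
\]
All remaining terms in the sum are likewise absorbed into $\cO(\log n(m/n + \log n))$, since $m/n$ and $\log n$ are both dominated by $\log n(m/n + \log n)$. This gives the claimed bound and completes the proof. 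The Stage~3 estimate from Lemma~\ref{l3} is what actually sets the overall rate; no other stage is tight.
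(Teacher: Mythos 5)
Your proof is correct and follows essentially the same route as the paper's: sum the per-stage costs, cite Lemmas~\ref{l3}, \ref{l-loglog} and~\ref{lem-retrieve} for Stages~3, 6 and~7, and observe that Stage~3 dominates. You are slightly more explicit than the paper in absorbing the $\cO(\log n\,\log m)$ term via $\log(m/n)\le m/n$, but this is the same bookkeeping the paper performs when it rewrites that bound as $\cO(\log n(\log\frac{m}{n}+\log n))$.
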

\begin{proof}
Stage~1 takes time $\cO(m/n)$. 
The time needed to perform Stage~2 is proportional
to the number of nodes in temporary trees, which is~$\cO(m/n)$.
Lemma~\ref{l3} gives the bound on the time to perform Stage~3, which dominates
the time bound of the preprocessing.
Both Stage~4 and Stage~5 take time~$\cO(\log n)$. 
By Lemma~\ref{l-loglog}, Stage~6 takes time 
$\cO(\log n \log m)
=\cO(\log n (\log\frac{m}{n}+\log n))$.
The input is retrieved in time $\cO(\log^2 n)$ per $\cO(1)$ memory cells
for each processor by Lemma~\ref{lem-retrieve}.
\end{proof}

\section{Simulation Proper}
\label{simulation}

In this section we explain a method to simulate any program of a fully
reliable PRAM on the same variant (EREW,CREW,CRCW, respectively) 
as the simulator, when the simulator may have faulty components.
Its slowdown is $\cO(\log m)$ per simulated step, this overhead is
asymptotically equal to the time needed to locate a physical address 
of a memory cell given its virtual address.

The simulation relies on the results of preprocessing, 
as described in the preceding section.
The preprocessing terminates with an initial segment of active processors
storing the input. 
The input given to the simulator consists of a program $\cal P$ to execute 
and data $\cal D$ which are to be input for $\cal P$.
We assume that $\cal P$ is written in a low-level machine-like language.

We start with a general scheme of the simulation.
The simulation proper starts with the active processors 
writing $\cal P$ and $\cal D$ into a contiguous initial segment 
of the virtual shared memory to make it available to all the simulated 
processors.
Then $\cal P$ is executed step by step.
Each of the active processors simulates $\cO(1)$ processors.
Active processors have their local memories partitioned into regions.
One region is needed for the purpose of running the 
simulating algorithm, the remaining ones are used 
as local memories of the simulated processors. 
Each simulated processor needs only to perform specific machine 
instructions that are known to the respective active processor 
of the simulator.
This means in particular that there are no additional local-memory 
requirements of the simulated processors except for those required by $\cal P$.
We assume that local memories of processors of the simulating normal PRAM
machine are sufficiently large to simulate $n$ processors in total.
Moreover we assume that the repertoire of machine instructions of the simulator
and the simulated machine are exactly the same.
Hence all the machine instructions, with the exception of a shared memory 
access, can be performed by the simulator in constant time.

What remains to be explained is how an access to the virtual shared memory is
implemented.
Let us consider a single instance of access, and assume that each active 
processor already knows the virtual address to find, if any.
If the simulator is either a CREW or a CRCW PRAM then the respective
physical address can be found with the help of only the memory tree.
To this end, each active processor traverses this tree in a binary-search
manner until eventually it reaches a leaf with the address.
The traversal requires a concurrent-read capability, since all the active
processors start from reading the contents of the root of the memory tree,
and then a number of them may read simultaneously the contents 
of other nodes of the tree.
Care needs to be taken to have all the actual operations of reading/writing
to virtual address performed in the same time step on the memory cells
storing the contents of the virtual cells, to preserve the semantics 
of concurrent access of the simulator. 
This is easy since the memory tree is perfectly balanced, 
so the times to locate any leaves are the same.
The total time of an operation of reading/writing to a virtual memory cell
is of the order of the height of the memory tree,
which is $\cO(\log m)$.

In what follows we present a solution in the case of the EREW PRAM.
The search for physical addresses is in two stages: {\em construction\/}
which is followed by {\em traversal}.
During a construction stage the active processors are arranged in a special
tree  that we call {\em organized}.
During a traversal, the root of the organized tree
descends down the memory tree, starting from its root,
and in the process the tree disintegrates into smaller trees.
These trees keep moving down the memory tree and splitting into smaller ones
until eventually single nodes/processors remain.
This way of traversal of the memory tree guarantees that no 
concurrent-reading is ever performed until the leaves are reached.
An organized tree has the property that each of its subtrees is also 
organized.
The structure of an organized tree is such that the traversal is fairly
direct.
This structure is implicitly determined by the way in which the tree 
is to be used to locate virtual addresses in the memory tree.

Each active processor is  associated with one leaf of an organized tree,
and with at most one internal node.
We say that the processor owns these nodes, similarly as in the description
of Stage~5 of preprocessing.
More precisely, all except for one processors owning the leaves also own one
internal node.
The one spare node is known by the processor owning the root, this knowledge 
is used during construction.
An internal node $v$ stores both the smallest and the largest
among the virtual addresses of the processors at the leaves of the subtree
rooted at $v$, we call them {\em leftmost\/} and {\em rightmost addresses\/}
of~$v$.
The operation of construction is performed repeatedly for each operation 
of virtual memory access. 
The current organized tree is discarded after a traversal has been completed.

\noindent
{\bf Stage of traversal.}\ 
Organized trees traverse the memory tree in a binary-search fashion. 
Consider such a tree $T$ during traversal, which is at a node $v$ 
of the memory tree.
The processor that owns the root of $T$ can decide in constant time
if the tree splits here or not.
In the former case it moves down the tree to one of the children of~$v$.
This happens if all the virtual addresses searched by the processors of $T$
are in one of the subtrees of node~$v$.
Otherwise the tree is split: the left subtree of $T$ heads for the left child
of $v$ and the right one for the other child. 
The root of $T$ is discarded.

It is the structure of $T$ which makes it possible to make this decision in a
constant time and to guarantee correctness.
If $v$ is a node of an organized tree then 
the leftmost addresses (or the rightmost ones) of the children of~$v$
determine the level of the memory tree at which the tree with root at $v$
will split while traversing the memory tree to carry the processors at their
leaves to their virtual addresses.
This level can be computed in constant time 
because the memory tree is perfectly balanced.

\noindent
{\bf Stage of construction.}
The following is a detailed recipe how to construct an organized tree for a
given memory access operation.
The active processors start from the leaves of the processor tree and 
move towards the root.
The computation proceeds in phases corresponding to the levels of this tree.
A phase starts with organized trees associated with all the nodes on a level
of the processor tree.
Every two trees associated with siblings are merged into one tree which is then
moved to the parent.

We need to define one more notion, that of the {\em binary path\/} 
of an organized tree $T$.
If $T$ is a single node, that is, a single processor, 
then let $v$ denote the node of the memory tree
storing the virtual address needed by the processor.
Otherwise let $v$ be the node of the memory tree where tree $T$ splits for the
first time while traversing the memory tree.
The path from the root of the memory tree down to $v$ can be encoded in binary:
start with an empty sequence, then traverse the links towards $v$ appending
$0/1$ if moving to a left/right child, respectively.
Such binary paths can be compared by the lexicographic ordering: $x$ precedes
$y$ if $x$ has $0$ at the first position on which $x$ and $y$ differ,
or if $x$ is a prefix of~$y$.
During construction, each organized tree stores its binary path at the root.
If an organized tree is a single node/processor $P_i$ then the binary 
path of $P_i$ can be computed in constant time: 
it is the binary representation of the number
equal to the virtual address that $P_i$ needs to locate.
If two organized trees with binary paths $r_1$ and $r_2$ are merged then the
resulting organized tree has the longest common prefix of $r_1$ and $r_2$ 
as its binary path.

A procedure to merge two organized trees $T_1$ and $T_2$ is defined
recursively as follows.
Let $r_1$ and $r_2$ be the binary paths of $T_1$ and $T_2$, respectively.
The spare record of $T_1$ is designated as the root, and the spare record of
$T_2$ is designated as the spare record of the merged tree.

\noindent
{\bf Case 1:} None among $r_1$ and $r_2$ is a prefix of the other.

Trees $T_1$ and $T_2$ are made children of the root in such order that the
binary path of the left subtree precedes the binary path of the right one.

\noindent
{\bf Case 2:} Paths $r_1$ and $r_2$ are equal.

The left subtrees of $T_1$ and $T_2$ are merged recursively and then attached
as the left subtree of the root, similarly the right subtrees.

\noindent
{\bf Case 3:} One of the paths $r_1$ and $r_2$ is a proper prefix of the other.


\begin{figure}[t]
\begin{center}
\includegraphics[width=\pagewidth]{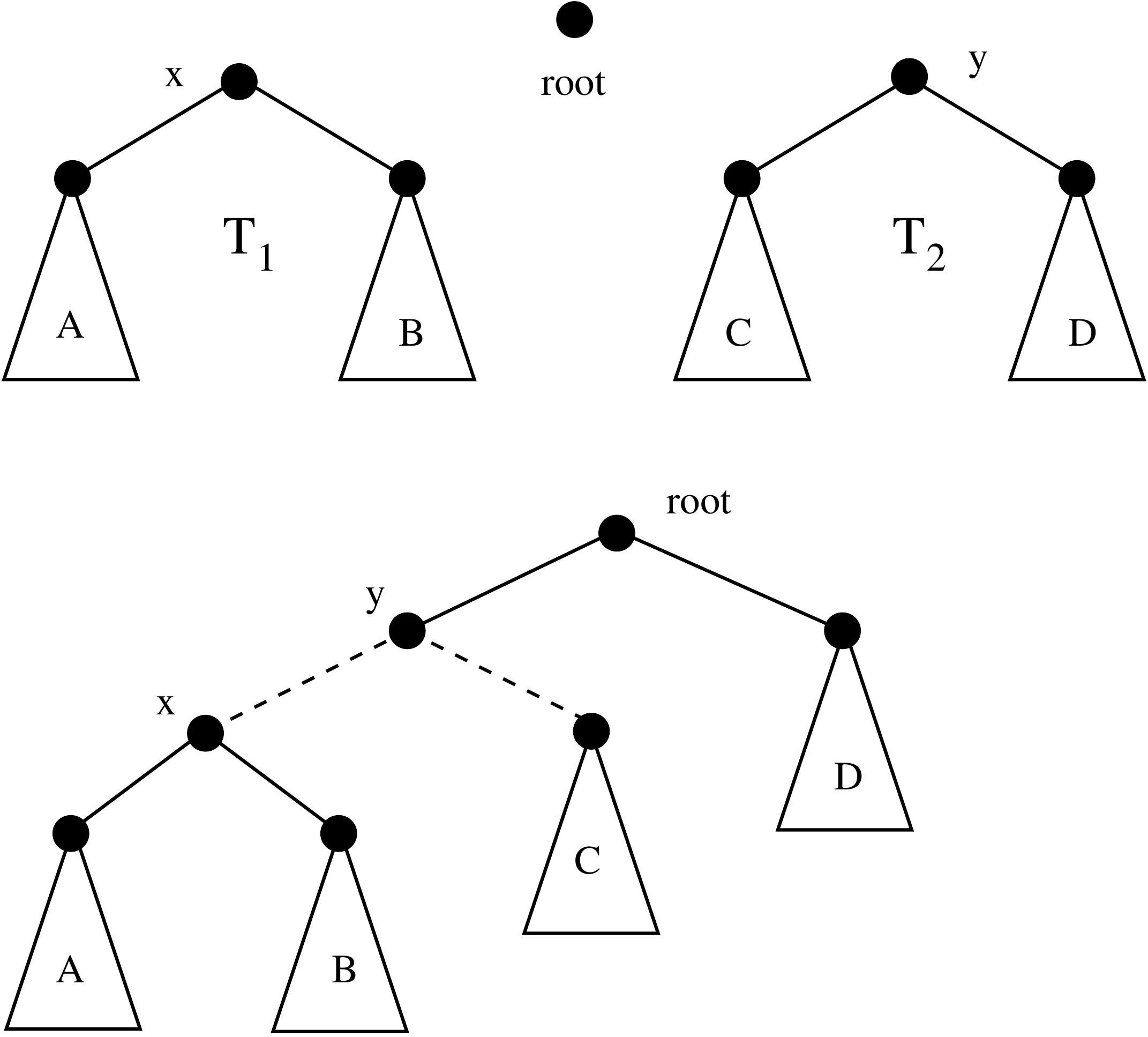}

~

\centerline{\parbox{\captionwidth}{\caption{\label{rys-2} 
Merging organized trees, as in Case 3. 
Dotted lines indicate that the tree rooted at $x$ and tree $C$ still need to be merged recursively.
}}}
\end{center}
\end{figure}

Suppose that $r_1$ is a prefix of $r_2$, and also that the first symbol after
prefix $r_1$ in $r_2$ is $0$, the other cases are symmetric.
The right subtree of $T_2$ becomes the right subtree of the root.
The root of $T_2$ is made the left child of the new root. The left subtree 
of the root is obtained by merging $T_1$ with the left subtree of $T_2$.
This is depicted in Figure~\ref{rys-2}.

This recursive procedure is performed by going from the root 
towards the leaves, taking time $\cO(1)$ at each level.
Note that it can be executed in a pipelined fashion:
once roots have rearranged links and a new root knows its children,
the root of the merged tree can be moved up 
to the parent in the processor tree.
The delay is $\cO(1)$ because each active processor serves at most 
two nodes in both the organized and the processor trees.

\begin{theorem}
The memory access for a single step can be performed in time $\cO(\log m)$.
\end{theorem}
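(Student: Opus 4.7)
The plan is to bound separately the two stages described just above the statement, namely the construction of the organized tree on top of the processor tree and the subsequent traversal of the memory tree, and then to combine them using $m\ge\alpha n$.

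First I would analyze the traversal stage, which is the conceptually simpler one. The memory tree built in Stage~6 is a complete binary tree containing $\Theta(m)$ nodes, so its depth is $\cO(\log m)$. An organized tree residing at a node $v$ of the memory tree consults only the leftmost/rightmost addresses stored at the root and its children, which takes $\cO(1)$ work per level and, crucially, no two organized trees ever occupy the same memory-tree node, because the defining invariant of an organized tree is that it splits precisely where its binary path branches. Hence each descent step is EREW-safe and takes constant time, giving total traversal time $\cO(\log m)$. I would also note that because the memory tree is perfectly balanced, all processors reach their target cells at the same level and the actual read/write occurs simultaneously, preserving the PRAM semantics.

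Next I would bound the construction stage. The processor tree built in Stage~5 has $\cO(\log n)$ levels. At each level, pairs of sibling organized trees $T_1,T_2$ are merged; the merge procedure given in Cases~1--3 rearranges only the roots of $T_1,T_2$ and the binary paths stored there, which is $\cO(1)$ work before the new root is ready to be shipped to the parent in the processor tree. The recursive part of Case~2/Case~3 propagates downward through the merged tree, but by the pipelining observation immediately after Figure~\ref{rys-2}, the root of the merged tree is available to the parent merger after $\cO(1)$ delay, while the deeper recursive merges follow in a pipeline. Because each active processor owns at most two records in the processor tree and two in the organized trees, no processor is asked to do more than $\cO(1)$ work at each time step, and the overall construction finishes in $\cO(\log n)$ steps; exclusivity of access again follows from the fact that each processor manipulates only its own records and those of its sibling on the current level.

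Adding the two stages gives $\cO(\log n)+\cO(\log m)$. Since a normal PRAM satisfies $m\ge\alpha n$, we have $\log n = \cO(\log m)$, so the total cost of one virtual-memory access is $\cO(\log m)$, as claimed. The main obstacle I expect is the bookkeeping that certifies the $\cO(1)$ per-level cost of the merge, specifically verifying that in Case~3 the recursive merge of $T_1$ with the left subtree of $T_2$ can genuinely be pipelined so that the \emph{root} of the merged tree is exported upward after constant delay without violating EREW. This reduces to checking that the information needed by the parent merger (the new root's spare record and binary path) is determined from $r_1,r_2$ alone and is therefore available immediately, which the case analysis above supplies.
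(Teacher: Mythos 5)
Your proposal follows essentially the same route as the paper: it splits the cost into the construction of the organized tree (pipelined over the levels of the processor tree, with $\cO(1)$ delay per merge because each processor owns $\cO(1)$ records) and the traversal of the memory tree (bounded by its depth $\cO(\log m)$, with exclusivity guaranteed by the splitting discipline), and your final bound agrees with the paper's. One minor inaccuracy: the pipelined construction is not $\cO(\log n)$ but $\cO(\log n+\log m)=\cO(\log m)$, since after the top-level merge begins its recursive descent must still propagate to the bottom of the final organized tree, whose depth can be as large as $\Theta(\log m)$ and hence much larger than $\log n$ when $m$ is superpolynomial in $n$ (the paper accordingly states that pipelining reduces $\cO(\log^2 m)$ to $\cO(\log m)$, not to $\cO(\log n)$); this does not affect your conclusion, because you add an $\cO(\log m)$ traversal term in any case.
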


\begin{proof}
The cases of CREW and CRCW PRAMs have been explained before,
we need to consider only the EREW PRAM and the solution by way of organized
trees.
Merging all of the trees would take time $\cO(\log^2 m)$ if performed
without pipelining, which saves a factor of $\cO(\log m)$.
The actual traversal of the memory tree takes time $\cO(\log m)$.
Hence an organized tree can be built in time $\cO(\log m)$.
The task of locating virtual addresses is performed in a binary search fashion
and takes also time $\cO(\log m)$.
\end{proof}

\vspace{2ex}

\noindent
{\bf Acknowledgement.}
The authors thank Krzysztof Diks for discussions of the related fault-tolerance issues, 
and to Piotr Indyk for sharing his insights on the Discrete Fourier Transform and information dispersal.


\bibliographystyle{plain}

\bibliography{bogdan,books,other,parallel}

\end{document}